\documentclass[pdflatex,sn-basic,iicol]{sn-jnl}% Math and Physical Sciences Author Year Reference Style
\usepackage{graphicx}%
\usepackage{multirow}%
\usepackage{amsmath,amssymb,amsfonts}%
\usepackage{amsthm}%
\usepackage{mathrsfs}%
\usepackage[title]{appendix}%
\usepackage{xcolor}%
\usepackage{textcomp}%
\usepackage{manyfoot}%
\usepackage{booktabs}%
\usepackage{algorithm}%
\usepackage{algorithmicx}%
\usepackage{algpseudocode}%
\usepackage{listings}%
\let\orcidlogo\relax

\usepackage{orcidlink}
\theoremstyle{thmstyleone}%
\newtheorem{theorem}{Theorem}%  meant for continuous numbers
 
\newtheorem{corollary}{Corollary}[theorem]

\newtheorem{proposition}[theorem]{Proposition}% 

\theoremstyle{thmstyletwo}%
\newtheorem{example}{Example}%

\theoremstyle{thmstylethree}%
\newtheorem{definition}{Definition}%

\DeclareMathOperator*{\argmax}{\arg\max}

\newcommand{\N}{\mathcal{N}}
\newcommand{\remove}[1]{}
\newcommand{\cP}{\mathcal{P}}
\newcommand{\bp}{\mathbf{p}}
\newcommand{\bq}{\mathbf{q}}
\newcommand{\bu}{\mathbf{u}}
\newcommand{\x}{\mathbf{x}}
\newcommand{\y}{\mathbf{y}}

\begin{document}

\title[Article Title]{An Information Theoretic Treatment of Yager's
Probability Distribution  Negation}

%%=============================================================%%
%% GivenName	-> \fnm{Joergen W.}
%% Particle	-> \spfx{van der} -> surname prefix
%% FamilyName	-> \sur{Ploeg}
%% Suffix	-> \sfx{IV}
%% \author*[1,2]{\fnm{Joergen W.} \spfx{van der} \sur{Ploeg} 
%%  \sfx{IV}}\email{iauthor@gmail.com}
%%=============================================================%%

\author{\fnm{Roberto} \sur{Bruno}}\email{rbruno@unisa.it}

\author{\fnm{Ugo} \sur{Vaccaro}}\email{uvaccaro@unisa.it}

\affil{\orgdiv{Department of Computer Science}, \orgname{University of Salerno}, \orgaddress{\street{Via Giovanni Paolo II, 132}, \city{Fisciano}, \postcode{84084}, \state{Salerno}, \country{Italy}}}

%%==================================%%
%% Sample for unstructured abstract %%
%%==================================%%

\abstract{
{
In the seminal paper \citep{Yager2015}, Yager defined the negation  of a 
probability distribution $\bp=(p_1,\dots,p_n)$, as the distribution 
 $\overline{\bp} = (\overline{p}_1,\dots,\overline{p}_n)$, where 
    $\overline{p}_i = ({1-p_i})/({n-1}),$ for $ i=1, \ldots , n.$

}

In this paper, we present a comprehensive information-theoretic analysis of Yager’s negation and its generalizations.
%related classes of probability distribution negations, with particular emphasis on the class of independent negators. 
Using tools from information theory and majorization theory, we 
{unify, extend, and strengthen a number of previously known properties of
Yager's negation within a common framework.}

Overall, our results
%provide new theoretical insights into the structure and properties of probability distribution negations and 
offer strong theoretical justification for Yager's negation as the most natural and principled definition of probability distribution negation under various 
{information theoretic} criteria.}

\keywords{Negation of a probability distribution, Uncertainty, Majorization, Schur-concave functions, $\phi$-entropies}

\maketitle
\section{Introduction}
The concept of negation is fundamental to human reasoning and communication, serving as a primary mechanism to express opposition, denial, or the complement of a concept. In classical logic, negation is well-defined and unambiguous: it relates a proposition $P$ to a proposition $\neg P$ (``not $P$") such that if $P$ is true, 
then $\neg P$ is false, and vice versa. This framework extends naturally to fuzzy logic \citep{Zadeh1965}, where the truth value of a proposition can be any real number $a \in [0, 1]$, and its negation is 
given a truth value equal to $1-a$. However, when knowledge is represented by probability distributions over a set of multiple outcomes, defining ``negation" becomes significantly more challenging, {since many mathematically consistent definitions of negation are possible, as 
we shall see. One of the purposes of this paper is  to show that the definition 
proposed in \citep{Yager2015} is, from an \lq\lq axiomatic" 
Information Theoretic point of view, the right concept of probability distribution negation.}

\subsection{Motivating Scenarios}\label{sec:motivations}

The necessity to negate probability distributions is not merely a theoretical curiosity; it arises in artificial intelligence and knowledge representation systems. For instance, if an intelligent system models ``High Price" as a probability distribution, it must also model ``Not High Price" to process negative inference rules (e.g., IF Price is Not High THEN Buy) \citep{generating,kreinovich2018beyond}. 

Another example arises in the domain of target recognition and sensor data fusion, where
the reliability of information fusion is frequently compromised by high-conflict evidence, that is,
scenarios where disparate sensors report mutually exclusive hypotheses with high confidence. Standard fusion rules, such as Dempster’s classical combination rule \citep{Dempster1967}, often fail under these conditions, leading to counterintuitive results. To address this problem, \cite{Gao+} proposed using generalized negations to reduce conflict among divergent evidence sources before combining them. Their work showed that in noisy environments, negative evidence (e.g., ruling out a target) is frequently more reliable than positive identification. Similarly, \cite{Xu+} showed that negating probability distributions enables mass redistribution that mitigates evidential conflict, providing a more robust framework for target recognition under noisy and contradictory conditions.

The negation of probability distributions has also been proved 
to be valuable in Multi-Criteria Decision Making. In this area, the main challenge lies in deciding how much ``weight" or importance one has to assign to some criteria when the underlying data is unreliable. \cite{Sun2020} utilized negation to address this problem by measuring the \lq\lq distance"
between supporting evidence of some criteria,
and its negation. If the supporting
evidence is very different from its negation, then the authors of \citep{Sun2020} assumed that said evidence be 
reliable and therefore is given more weight. Vice versa, if it is too similar to its negation, then such
evidence  is vague or uncertain; thereby, it is given a smaller weight. As another example, \cite{tanwar2023generalization} proposed a generalization of negation based on biased distributions and applied it to medical diagnosis, demonstrating its effectiveness in distinguishing between ailments that share a common set of symptoms.

Quite interestingly, the concept of probability negation also emerges in modern machine learning, specifically within the paradigm of learning from complementary labels \citep{ishida2017learning}. In this framework, instead of providing the true label for each data point in the training data, one provides a ``complementary'' label: a single class to which the observation does \textit{not} belong. Then, to train a model using these complementary labels, the target probability distribution is constructed by distributing the probability mass equally among all remaining alternatives. Mathematically, this uniform reallocation is exactly equivalent to the formulation of negation proposed by \cite{Yager2015}, which will be discussed later in this section. Moreover, when this complementary label approach is combined with regularization techniques such as label smoothing \citep{szegedy2016rethinking} to mitigate model overfitting, the resulting target distribution takes the exact mathematical form of the broader class of negations introduced in \citep{generating, klein2022some}.

The varieties of applications of the negation of probability distributions 
naturally lead to the following fundamental problem: how can we put forward a principled definition of the 
 negation of a probability distribution?  {Since probabilities must sum to one, negating
 the probability $p_i$ of an event $x_i$ inherently shifts
 probability mass towards its complement, 
 requiring a redistribution 
 of mass among the remaining alternative outcomes.}
 By leveraging the Dempster-Shafer theory \citep{Shafer1976, smets1994,smets1994_2}, in the seminal paper \citep{Yager2015}, Yager proposed to
define the negation of a probability distribution
on the basis of the redistribution of probability masses according to a maximum entropy allocation principle. He argued that in the absence of specific information favoring one alternative over another, the probability mass of the negated event should be redistributed equally among the other outcomes to preserve neutrality.
The entropy measure Yager used is the so-called \lq\lq logical entropy" (see
\citep{ell} and references therein quoted). Formally, for a probability distribution $\bp=(p_1,\dots,p_n)$,  \cite{Yager2015} defined its 
negation $\overline{\bp} = (\overline{p}_1,\dots,\overline{p}_n)$ as:
\begin{equation*}
    \overline{p}_i = \frac{1-p_i}{n-1}, \quad i=1, \ldots , n.
\end{equation*}

This transformation ensures that the negation process introduces the maximum amount of uncertainty regarding the remaining outcomes, thereby avoiding unjustified bias.

Since Yager's proposal, 
many variants and generalizations of his approach 
have been proposed. \cite{generating} generalized the concept by viewing negations as point-by-point transformations of the original probabilities,  using appropriate decreasing functions called ``negators".
\cite{TANWAR2023113557} argued that equal redistribution is not always ideal, suggesting
a  framework with an unequal redistribution of probabilities, which generalizes Yager's negation. Over the years, various other notions of negation have been defined and proposed depending on the context; for further details, see \citep{ Kaur_comment,Kaur_markov, kaur, LiuDL, Luo_matrix2020, pham_estimating2021, Wu_exp,Xu+,YinDD} and the references quoted therein. 

In parallel, researchers have also investigated the theoretical properties of these transformations that go 
from probabilities distributions to their negations, e.g., \citep{klein2022some,TANWAR2023113557,Yager2015}. 
The present paper aligns directly with this line of research.

\subsection{Previous Work}

As Yager's paper \citep{Yager2015} has spanned a sizeable
literature, in this section we limit ourselves to discussing the work that more directly
investigates
the consequences and the mathematical properties of Yager's probability negation,
and its generalizations.

Yager himself \citep{Yager2015} observed that repeated applications of Yager’s negation progressively increase the uncertainty of the distribution (i.e., its entropy), eventually converging to the uniform distribution. Successively, focusing on the generalization of negation proposed by  Batyrshin \textit{et al.} in  \citep{generating} through the use of specific %decreasing 
functions called  ``negators", \cite{klein2022some} studied several 
properties of these transformations. First, Klein resolved an open problem 
posed in \citep{batyrshin2021contracting}
by proving the structural equivalence between independent negators and linear negators, two specific classes of 
probability distribution negation, introduced
in \citep{generating}, that generalize Yager's negation. Secondly, Klein demonstrated that the repeated application of independent negators
%—--a class to which Yager's negation belongs--—
leads to a monotonic loss of information, as measured by
a general class of entropies, called  $\phi$-entropies in \citep{klein2022some}. 
Furthermore, Klein proved the interesting result that
among the class of independent negators,  Yager’s negation minimizes any $\phi$-entropy. Finally, for some specific classes of $\phi$-entropies, he provided explicit quantitative bounds for this increase in uncertainty.

Parallel to these investigations, 
several researchers have proposed variants of traditional information measures to  evaluate
the properties of probability distributions' negations. 
This line of research has been investigated in \citep{chaudhary2025extension, Deng_extropy2024}
 where
 alternative informational measures, including extropy and varextropy, have been employed.

More recently, several researchers have tackled the problem of
 quantifying the statistical distance between a distribution and its negation. 
%While previous works mainly analyzed the entropy of a negated distribution, 
\cite{tinaztepe2023application} investigated the Kullback-Leibler divergence $D(\bp\|\overline{\bp})$ to measure how much a distribution $\bp$ differs from its Yager's negation $\overline{\bp}$. By extending the Ky Fan inequality \citep{beckenbach2012inequalities}, Tinaztepe established  upper and lower bounds on  the Kullback-Leibler divergence  $D(\bp\|\overline{\bp})$ in terms of Gini entropy. His analysis revealed that the divergence is minimized when the distribution is uniform---where the distribution and its negation become identical---and increases as the probability mass concentrates on specific outcomes.

%%%%%%%%%%%%%%%%%%%%%%%%%%%%%
\subsection{Our Contributions}
In this paper, we provide a comprehensive and in-depth information-theoretical 
analysis of the properties of probability distribution negations, with a focus on Yager's negation. Using
 tools from Information Theory \citep{mceliece2002theory} and Majorization Theory \citep{marshall1979inequalities},
we recover several results from the literature in a simpler and unified way; more importantly,
we provide novel and improved quantitative results with respect to the known literature on the topic.
Our main contributions can be summarized as follows.

In Section \ref{sec:KL_div}, extending the results of \cite{tinaztepe2023application}, we prove that among the class 
of independent negators \citep{generating} (a class of probability distribution negation that subsumes Yager's negator), 
Yager's negator is the one that maximizes the Kullback-Leibler divergence between the original distribution and its negation. 
This result is intuitively
pleasing and reinforces the idea that Yager's negator represents the \textit{canonical} negator among independent negators. 
In fact, since the Kullback-Leibler divergence is the most widely used measure of distance between probability distributions, the fact that Yager’s negator is provably the farthest from the original distribution aligns perfectly with the intuitive notion that a negation should represent the \lq\lq opposite" of the original statement.

In Section \ref{sec:min_entropy}, we recover and generalize the results of \cite{klein2022some} to the entire class of Schur-concave functions \citep{marshall1979inequalities}. More explicitly,
we prove that among the class of independent negators, Yager's negator is the one that minimizes any Schur-concave function. Schur-concave functions constitute an important class of functions
and appear in many problems of pure and applied mathematics.
Furthermore, by leveraging Majorization Theory, we also obtain explicit quantitative bounds that allow for a comparison between different negators.

In Section~\ref{sec:maximal_mutual_information}, we analyze the negation process through the lens of a communication channel. We demonstrate that, within the class of  channels induced by independent negators, the channel associated with Yager's negator maximizes the mutual information. In doing so, it proves to be the operator that preserves the greatest amount of information about the initial distribution.

In Section \ref{sec:increasing}, by leveraging Majorization Theory again \citep{marshall1979inequalities}, we first demonstrate that the application of any independent negator increases uncertainty across the entire class of Schur-concave functions. Subsequently, focusing on the narrower class of $\phi$-entropies, we provide explicit quantitative estimates for the entropy increase resulting from successive applications of an arbitrary independent negator. Finally, we leverage these bounds to explicitly quantify the rate at which iterated applications of an independent negator converge to the uniform distribution.

Taken all together, our results provide evidence that among a wide class of possible ways to define 
the negation of a probability distribution, Yager's proposal seems to be the most sensible one under many criteria.
%%%%%%%%%%%%%%%%%%%%%%%%%%%

\section{Mathematical Preliminaries}\label{sec:pre}

Let 
\begin{align*}
        \cP_n = \{\bp=&(p_1,\dots,p_n)\ :\   \sum_{i=1}^n p_i = 1,\\&\quad\, 0\leq p_i\leq1,\,\forall i=1,\dots,n\}
\end{align*}
be the $(n-1)$-dimensional probability simplex.

{In the most general setting \citep{generating}, 
%a \cite{generating} considered a wider definition of negations by treating them as point-by-point transformations. Formally, 
a negation of a probability distribution can be defined through  a function $\N:\cP_n \to \cP_n$ such that for any distribution $\bp = (p_1,\ldots,p_n) \in \cP_n$, the resulting 
 negated probability distribution 
\begin{equation}
    \N(\bp)=(q_1,\ldots,q_n),
\end{equation}
has the property that  
\begin{equation}\label{eq:property_baty}
    \text{if } p_i\leq p_j, \text{ then } q_i \geq q_j,  \mbox{ for all } 
    i,j\in \{1,\dots,n\}.
\end{equation}

The property  \eqref{eq:property_baty} reflects the intuitive requirement that negation should be order-reversing. Specifically, it suggests that the less likely a particular outcome is in the original distribution, the more prominent its likelihood becomes upon negation, and vice versa.

Each negation $\N$ is characterized by an associated \textit{negator} $N:[0,1]\to[0,1]$, which is a non-increasing function that maps the individual probabilities $p_i$ into the negated probability $q_i$, i.e., such that
$$\N(\bp)=(N(p_1), \ldots, N(p_n)).$$
 Moreover, from the definition of $\N$, it follows that a negator $N$ must satisfy the following properties:
\begin{align}
    &0 \leq N(p_i) \leq 1, \forall i=1,\dots,n,\\
    &\sum_{i=1}^n N(p_i)=1,\\
    &\text{if } p_i\leq p_j, \text{ then } N(p_i) \geq N(p_j). \label{mon}
\end{align}
Under the above constraints, multiple valid negators exist. Two commonly occurring negators are Yager's negator $N_Y$ (described previously)
\begin{equation}\label{defY}
    N_Y(p_i) = \frac{1-p_i}{n-1}, \ \forall i=1,\dots,n,
\end{equation}
and the uniform negator $N_U$ 
\begin{equation}\label{defU}
    N_U(p_i) = \frac{1}{n}, \forall i=1,\dots,n.
\end{equation}
These negators belong to the class of independent negators \citep{generating}.
A negator $N$ is \textit{independent} when the negation of one outcome’s probability does not depend on the probabilities of the other outcomes, i.e., when $N(p_i)$ depends only on the probability $p_i$. Moreover, a negator $N_\alpha$ is \textit{linear} if it can be expressed as a convex combination of the uniform negator $N_U$ and Yager's negator $N_Y$, i.e., if
$$N_\alpha(p_i)=\alpha N_U(p_i) + (1-\alpha)N_Y(p_i),$$ 
for some $\alpha\in[0,1]$.

In a recent work, \cite{klein2022some} established the equivalence between  independent negators and linear negators. More precisely, \textit{any} independent negator $N$ can be expressed as follows:
\begin{equation}
    \label{eq:convex_combination}
    N(p_i) = \alpha\frac{1}{n} + (1-\alpha)\frac{1-p_i}{n-1}, \quad \forall i=1,\dots,n,
\end{equation}
where $\alpha\in[0,1]$.

In the following, we use $N_{\alpha}$ with $\alpha\in[0,1]$ to denote an arbitrary independent negator defined as shown in (\ref{eq:convex_combination}). We also observe that $N_1 = N_U$ and $N_0 = N_Y$. Similarly, we denote by $\N_{\alpha}(\bp)=
(N_\alpha(p_1), \ldots, N_\alpha(p_n))$ the probability distribution obtained through the application of the negator $N_{\alpha}$ to the probability distribution $\bp=(p_1, \ldots , p_n)$.}

We now recall some intermediate results and definitions that we will need later on.

\begin{definition}[\cite{marshall1979inequalities}]
    Given two vectors $\x, \y \in \mathbb{R}_+^n $, we say that $\x \preceq \y$, or equivalently that $\x$ is \textit{majorized} by $\y$  if, and only if, by denoting with $\x^{\downarrow}$ and $\y^{\downarrow}$ the vector ordered in a non-increasing fashion, it holds that
\begin{align*}
    \sum_{i=1}^k x^{\downarrow}_i &\leq \sum_{i=1}^k y^{\downarrow}_i,\:\forall k=1,\dots,n-1,\\&\text{and}\\
    \sum_{i=1}^n  x^{\downarrow}_i &=\sum_{i=1}^n y^{\downarrow}_i.
\end{align*}
\end{definition}

\begin{definition}[\cite{marshall1979inequalities}]\label{def:schur}
We say that a real-valued function $\psi:\cP_n \to \mathbb{R}_+$ is Schur-convex (resp., Schur-concave) if $\psi$ is order preserving (resp., inverse-order preserving) with respect to the partial order $\preceq$, that is, 
\begin{equation*}
    \bp\preceq \bq \Rightarrow \psi(\bp) \leq \psi(\bq)\:\:(\text{resp., }\bp\preceq \bq \Rightarrow \psi(\bp) \geq \psi(\bq))
\end{equation*}
\end{definition}

\begin{definition}\label{def:phi_entropy}
    Let $\phi:[0,1]\to\mathbb{R}$ be a strictly concave function. The $\phi$-entropy $H_\phi$ of a probability distribution $\bp=(p_1,\dots,p_n)\in\cP_n$ is defined as
    \begin{equation}
        H_\phi(p) = \sum_{i=1}^n \phi(p_i).
    \end{equation}
\end{definition}
We note that the class of Schur-concave functions is strictly broader than that of $\phi$-entropies. Indeed, while every $\phi$-entropy is also a Schur-concave function, the converse is not true, as shown in the example below.
\begin{example}
   Some examples of Schur-concave functions are:
    \begin{itemize}
        \item Gini entropy: $H_G(\bp) = 1-\sum_{i=1}^n p^2_i=\sum_{i=1}^n (p_i-p_i^2)$. This is also a $\phi$-entropy generated by $\phi(x)=x-x^2$.
        \item Shannon entropy: $H(\bp) = -\sum_{i=1}^n p_i\log p_i$. This is also a $\phi$-entropy generated by $\phi(x)=-x\log x$. 
        \item Tsallis entropy: $H_q(\bp) = \frac{1}{q-1}(1-\sum_{i=1}^n p^q_i)=\sum_{i=1}^n\frac{p_i-p_i^q}{q-1}$. This is a $\phi$-entropy generated by $\phi(x)=\frac{x-x^q}{q-1}$.
        \item R\'enyi entropy of order $\alpha\in(0,\infty)\setminus\{1\}$: $H_\alpha(\bp) = \frac{1}{1-\alpha}\log\sum_{i=1}^n p^\alpha_i$. This is \textbf{not} a $\phi$-entropy because the outer logarithm prevents it from being expressed in the form $\sum_{i=1}^n \phi(p_i)$ for some appropriate function $\phi$.
    \end{itemize}
\end{example}

\begin{definition}[\cite{marshall1979inequalities}]
A $n\times n$ matrix $A\in\mathbb{R}^{n\times n}$ is said to be \textit{doubly stochastic} if 
\begin{equation}
    {A}_{ij} \geq 0,\quad\forall i,j=1,\dots,n,
\end{equation}
and 
\begin{align}
        \sum_{i=1}^n {A}_{ij} &= 1, \quad\forall j=1,\dots,n,\\
        \sum_{j=1}^n {A}_{ij} &= 1,\quad\forall i=1,\dots,n.
\end{align}

\end{definition}
%%% TODO: aggiungere una frase che introduca brevementi i risultati che sono richiamati su maggiorizzazione, Schur concave functions e così via

\begin{proposition}[{\cite{marshall1979inequalities}[Thm. B.6]}]\label{prop:equivalence_maj}
    Given a pair of vectors $\x,\y\in \mathbb{R}^n$, we have that $\x\preceq\y$ if, and only if, there exists a doubly stochastic matrix $T\in\mathbb{R}_+^{n\times n}$ for which $\x=\y T$.
\end{proposition}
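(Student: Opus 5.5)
The plan is the classical route through Birkhoff's theorem and the Hardy--Littlewood--P\'olya characterization. Write $\x\preceq\y$ in the sense of the definition above. Note that both directions are stated for row vectors with $\x=\y T$. Since $\x\preceq\y$ and the equation $\x=\y T$ are both invariant under simultaneous permutations, we may assume $\x=\x^{\downarrow}$ and $\y=\y^{\downarrow}$. This costs nothing, because permutation matrices are doubly stochastic and products of doubly stochastic matrices are doubly stochastic.

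\emph{Sufficiency.} Suppose $\x=\y T$ with $T$ doubly stochastic.
\begin{itemize}
\item Column sums of $T$ equal to $1$ give $\sum_j x_j=\sum_i y_i\sum_j T_{ij}=\sum_i y_i$.
\item Fix $k$ and a set $J$ of $k$ indices. Then $\sum_{j\in J}x_j=\sum_i y_i c_i$, where $c_i=\sum_{j\in J}T_{ij}\in[0,1]$ and $\sum_i c_i=k$.
\item Maximizing a linear functional with nonincreasing coefficients $y_i$ over such $c$ gives at most $\sum_{i\le k}y_i^{\downarrow}$. This follows from the elementary estimate $\sum_i y_ic_i-\sum_{i\le k}y_i=\sum_{i\le k}(c_i-1)(y_i-y_k)+\sum_{i>k}c_i(y_i-y_k)\le 0$.
\item Taking $J$ to be the indices of the $k$ largest $x_j$ yields the partial-sum inequalities.
\end{itemize}

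\emph{Necessity.} Build $T$ as a finite product of T-transforms, i.e.\ matrices $\lambda I+(1-\lambda)Q$ with $Q$ a transposition and $\lambda\in[0,1]$. Each such matrix is doubly stochastic, so their product is too. The construction is by induction on the number of indices with $x_i\ne y_i$.

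If $\x\ne\y$, choose $j$ as the largest index with $y_j>x_j$, and $k$ as the smallest index $>j$ with $y_k<x_k$. Such a $k$ exists by the equal total sums together with the partial-sum inequalities. Minimality gives $y_i=x_i$ for $j<i<k$, and hence $y_j>x_j\ge x_k>y_k$.

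Set $\delta=\min(y_j-x_j,\;x_k-y_k)$. Let $\y'$ be $\y$ with $y_j$ replaced by $y_j-\delta$ and $y_k$ by $y_k+\delta$. This is $\y$ times a T-transform on coordinates $j,k$, with $\lambda$ determined by $\delta=(1-\lambda)(y_j-y_k)$. Since $\delta\le(y_j-y_k)/2$ fails in general, I would instead check directly that $\lambda\in[0,1]$, using $\delta\le y_j-y_k$.

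The crucial verification, and the main obstacle, is showing $\x\preceq\y'$, with $\y'$ still sorted or at least with the sorted partial sums still dominating. Partial sums of $\y'$ change only for indices between $j$ and $k-1$, where they drop by $\delta$. There, $\sum_{i\le m}y'_i\ge\sum_{i\le m}x_i$ holds because $\sum_{i\le m}(y_i-x_i)\ge y_j-x_j\ge\delta$ on that range: for $j<i<k$ the terms vanish, and for $i<j$ the partial sum up to $j-1$ is already nonnegative. For ordering I would note that $y'_j\ge x_j\ge x_{j+1}=y_{j+1}$ and similarly at $k$, so $\y'$ stays nonincreasing.

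Finally, $\y'$ agrees with $\x$ in one more coordinate, so induction gives $\x=\y'T'=\y(T_1T')$ with $T_1T'$ doubly stochastic. An alternative would be a Birkhoff/LP duality argument, but the T-transform induction stays elementary.
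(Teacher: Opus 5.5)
Your proof is correct, and it is the standard Hardy--Littlewood--P\'olya/Muirhead argument from the cited Marshall--Olkin--Arnold; the paper gives no proof of its own, only the citation. Sufficiency goes through the bound $\sum_{i\le k}(c_i-1)(y_i-y_k)+\sum_{i>k}c_i(y_i-y_k)\le 0$, and necessity through successive T-transforms with $j$, $k$, $\delta$ chosen exactly as you choose them. There are a few cosmetic slips, none affecting the argument: the identity $\sum_j x_j=\sum_i y_i$ uses row sums, not column sums. Your aside that $\delta\le(y_j-y_k)/2$ ``fails in general'' is false, since $2\delta\le(y_j-x_j)+(x_k-y_k)\le y_j-y_k$, though only $\delta\le y_j-y_k$ is needed. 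And when $k=j+1$ the ordering check should read $y'_j\ge x_j\ge x_k\ge y'_k$ rather than rely on $x_{j+1}=y_{j+1}$.
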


\begin{proposition}[{\cite{marshall1979inequalities}[Prop. A.7.e]}]
\label{Proposition.1}
   Suppose $\y\in \mathbb{R}_{+}^n$ with $\sum_{i=1}^n y_i = 1$ and $\x=\y A$ for some doubly stochastic matrix $A$. Then for all convex functions $\psi:\mathbb{R} \to \mathbb{R}$, it holds that
   \begin{align*}
       \sum_{i=1}^n \psi(x_i) &\leq \lambda(A)n\psi\left(\frac{1}{n}\right) + (1-\lambda(A))\sum_{i=1}^n \psi(y_i)
   \end{align*}
   where
   \begin{equation*}
       \begin{aligned}
           \lambda(A) &= \min_{j,k} \sum_{i=1}^n \min({A}_{ji}, {A}_{ki})\\
           &= 1 - \frac{1}{2}\max_{j, k}\sum_{i=1}^n |{A}_{ji}-{A}_{ki}|
       \end{aligned}
   \end{equation*}
\end{proposition}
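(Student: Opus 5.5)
The plan is to deduce the inequality from a majorization statement. Set $\bu=(1/n,\dots,1/n)$ and $\mathbf{z}=\lambda\bu+(1-\lambda)\y$ with $\lambda=\lambda(A)$. I will show that $\x\preceq \mathbf{z}$. Granting this, the result follows in two lines. The map $\mathbf{v}\mapsto\sum_i\psi(v_i)$ is Schur-convex for convex $\psi$ (via Proposition~\ref{prop:equivalence_maj} and Jensen applied column by column), so $\sum_i\psi(x_i)\le\sum_i\psi(z_i)$. Convexity of $\psi$ applied coordinatewise to $z_i=\lambda\frac1n+(1-\lambda)y_i$ then gives $\sum_i\psi(z_i)\le \lambda n\psi(1/n)+(1-\lambda)\sum_i\psi(y_i)$.

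To prove $\x\preceq\mathbf{z}$, note first that both vectors sum to $1$, since $A$ is doubly stochastic. Next fix $k$ and an arbitrary index set $S$ with $|S|=k$, and write
\[
\sum_{i\in S}x_i=\sum_j y_j a_j,\qquad a_j=\sum_{i\in S}A_{ji}\in[0,1],\qquad \sum_j a_j=k .
\]
The second expression for $\lambda(A)$ says that any two rows of $A$ are at total-variation distance at most $1-\lambda$. Hence $|a_j-a_l|\le 1-\lambda$ for all $j,l$. Writing $c=\min_j a_j$, we get $a_j=c+(1-\lambda)b_j$ with $b_j\in[0,1]$ and $nc+(1-\lambda)m=k$, where $m=\sum_j b_j$. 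Therefore
\[
\sum_{i\in S}x_i=c+(1-\lambda)\sum_j y_jb_j\le c+(1-\lambda)T(m),
\]
where $T(m)$ denotes the sum of the largest $m$ entries of $\y$, interpolated linearly at fractional $m$. The function $T$ is concave and nondecreasing on $[0,n]$, with $T(0)=0$ and $T(n)=1$. The target is $\lambda k/n+(1-\lambda)T(k)$, which equals the sum of the $k$ largest entries of $\mathbf{z}$.

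The main obstacle is the resulting scalar inequality
\[
c+(1-\lambda)T(m)\le \frac{\lambda k}{n}+(1-\lambda)T(k)\qquad\text{whenever } nc+(1-\lambda)m=k,\ c\ge0,\ 0\le m\le n.
\]
Substituting $c=(k-(1-\lambda)m)/n$ reduces it to
\[
(1-\lambda)\Bigl[T(m)-T(k)-\frac{m-k}{n}\Bigr]\le 0 .
\]
It therefore suffices to prove $T(m)-T(k)\le (m-k)/n$ for all admissible $m,k$. I would split on the sign of $m-k$ and use concavity of $T$ together with $T(0)=0$ and $T(n)=1$. The chord from $0$ to $n$ has slope $1/n$, and the constraints $c\ge0$ and $a_j\le1$ restrict $m$ relative to $k$. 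I expect the case $m<k$ to be the delicate one: there concavity bounds increments in the wrong direction, so I would first check whether the constraints force $m\ge k$ in the regime that matters. If they do not, I would fall back on the exact bound $\sum_j y_j a_j\le \max_j a_j$ in degenerate configurations. Failing that, I would replace $T$ by the sharper bound obtained by optimizing $\sum_j y_jb_j$ jointly with $c$ under the constraint $a_j\le1$. Taking the maximum over $S$ then gives the partial-sum inequalities that define $\x\preceq\mathbf{z}$.
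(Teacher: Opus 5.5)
\textbf{The reduction target is false.} You reduce the proposition to $\x\preceq\mathbf{z}$ with $\mathbf{z}=\lambda\bu+(1-\lambda)\y$, which is strictly stronger than the proposition. That majorization fails in general, so none of your fallbacks can close the scalar inequality.

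Take $n=3$ and $A$ the Yager matrix ($A_{jj}=0$, $A_{jk}=1/2$ for $j\neq k$, so $\lambda(A)=1/2$), with $\y=(1,0,0)$. Then $\x=(0,1/2,1/2)$ and $\mathbf{z}=(2/3,1/6,1/6)$, and the two largest entries of $\x$ sum to $1>5/6$. In your notation this is $S=\{2,3\}$, $a=(1,1/2,1/2)$, $c=1/2$, $b=(1,0,0)$, $m=1<k=2$, and $T(1)-T(2)=0>-1/3$.

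The case $m>k$ is no better. With $\y=(1/2,1/2,0)$ and $S=\{3\}$ you get $m=2>k=1$ and $T(2)-T(1)=1/2>1/3$; concretely $x_3=1/2>5/12=\max_i z_i$. Concavity of $T$ with $T(0)=0$, $T(n)=1$ only says the slope is above $1/n$ early and below it late, so it controls neither sign of $m-k$.

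Even the weaker consequence $\sum_i\psi(x_i)\le\sum_i\psi(z_i)$ fails. For $\psi(s)=(s-1/5)_+$ in the first example it would require $3/5\le 7/15$. Meanwhile the proposition's right-hand side is $\frac32\cdot\frac2{15}+\frac12\cdot\frac45=\frac35$, so the proposition itself holds there with equality.

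The loss occurs when you apply convexity coordinatewise to $z_i=\lambda\frac1n+(1-\lambda)y_i$ before comparing. The right-hand side is a weighted sum of $\psi$ over $2n$ points: $1/n$ with weight $\lambda$, repeated $n$ times, and each $y_j$ with weight $1-\lambda$. It must not be collapsed to $n$ points. Note that the counterexample is exactly the matrix $A_0=A_Y$ the paper needs.

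\textbf{How to repair it.} Your row-overlap bookkeeping is the right ingredient once the comparison object changes.
\begin{itemize}
\item Both sides agree for affine $\psi$: each equals $a+nb$ for $\psi(s)=as+b$, using $\sum_j y_j=1$.
\item Only finitely many points are involved, so it suffices to take $\psi(s)=(s-t)_+$. That is, show for every $S$ that $\sum_j y_ja_j-kt\le\lambda n(1/n-t)_++(1-\lambda)\sum_j(y_j-t)_+$.
\item For $t\ge1/n$, your decomposition $a_j=c+(1-\lambda)b_j$ gives $\sum_j y_ja_j-kt=c(1-nt)+(1-\lambda)\sum_j b_j(y_j-t)\le(1-\lambda)\sum_j(y_j-t)_+$.
\item For $t<1/n$, use instead $a_j=C-(1-\lambda)b_j$ with $C=\max_j a_j\le1$. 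Then $\sum_j y_ja_j-kt=C(1-nt)+(1-\lambda)\sum_j b_j(t-y_j)$. Using $\sum_j(t-y_j)_+=\sum_j(y_j-t)_+-(1-nt)$, this is at most $(C-1+\lambda)(1-nt)+(1-\lambda)\sum_j(y_j-t)_+$, and $C\le1$ finishes.
\end{itemize}
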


\begin{proposition}[{\cite{ho2010interplay}[Thm. 3]}]\label{prop:divergence}
Let $\bp,\bq\in \cP_n$. If $\bq \preceq \bp$, then
\begin{equation}
    H(\bq) - H(\bp) \geq D(\bp \Vert \bq),
\end{equation}
where 
\begin{equation}\label{divdef}
D(\bp \| \bq) = \sum_{i=1}^n p_i \log (p_i/q_i)
\end{equation}
is the Kullback-Leibler divergence from $\bp$ to $\bq$.
\end{proposition}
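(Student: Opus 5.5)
The plan is to rewrite the claimed inequality as the nonnegativity of one linear functional, and then get that nonnegativity from the partial-sum inequalities defining $\bq\preceq\bp$ by Abel summation. One caveat first. $D(\bp\|\bq)$ depends on how the coordinates of $\bp$ and $\bq$ are matched, whereas majorization does not. So the statement has to be read as in \cite{ho2010interplay}, with $\bp$ and $\bq$ similarly ordered; without loss of generality $p_1\ge\dots\ge p_n$ and $q_1\ge\dots\ge q_n$. This hypothesis is genuinely needed. For example, $\bp=(1,0)$ and $\bq=(0.4,0.6)$ satisfy $\bq\preceq\bp$, but $D(\bp\|\bq)=\log 2.5>H(\bq)$. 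I will prove the statement under this ordering assumption.

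\textbf{Step 1 (algebraic reduction).} Expanding the definitions gives
\begin{equation*}
H(\bq)-H(\bp)-D(\bp\|\bq) = -\sum_i q_i\log q_i+\sum_i p_i\log p_i-\sum_i p_i\log p_i+\sum_i p_i\log q_i ,
\end{equation*}
so the difference equals $\sum_{i=1}^n (p_i-q_i)\log q_i$. It therefore suffices to show $\sum_i (p_i-q_i)a_i\ge 0$, where $a_i=\log q_i$ is non-increasing in $i$.

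\textbf{Step 2 (Abel summation).} Set $P_k=\sum_{i\le k}p_i$ and $Q_k=\sum_{i\le k}q_i$. Then
\begin{equation*}
\sum_{i=1}^n (p_i-q_i)a_i=\sum_{k=1}^{n-1}(P_k-Q_k)(a_k-a_{k+1})+(P_n-Q_n)a_n .
\end{equation*}
By the definition of $\bq\preceq\bp$ for the sorted vectors, $P_k\ge Q_k$ for every $k$ and $P_n=Q_n=1$. Since $a_k-a_{k+1}\ge 0$, every term on the right is nonnegative, and the last term vanishes.

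\textbf{Step 3 (zero entries).} The only delicate point is when some $q_i=0$, so that $a_i=-\infty$. Because $\bq$ is sorted, its zeros form a tail $q_{m+1}=\dots=q_n=0$. Then $Q_m=1$, and $P_m\ge Q_m$ forces $p_{m+1}=\dots=p_n=0$. With the usual conventions $0\log 0=0$ and $0\log(0/0)=0$, those indices contribute nothing to $H(\bp)$, $H(\bq)$ or $D(\bp\|\bq)$. The argument of Steps 1 and 2 can then be run on the first $m$ coordinates, where all $q_i>0$ and the partial-sum inequalities still hold with equality at $k=m$.

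I expect the main care to be needed in this boundary case and in justifying the similar-ordering assumption. The core inequality is just Abel summation against the majorization partial sums. The alternative route through Proposition~\ref{prop:equivalence_maj}, writing $\bq=\bp T$ and applying Jensen, seems not to control the cross term $\sum_i p_i\log q_i$ directly, so I would not use it.
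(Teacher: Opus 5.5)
Your proof is correct, and there is no competing argument to compare it with: the paper does not prove this proposition but cites it from Ho and Verd\'u. Your argument is complete and self-contained. You reduce $H(\bq)-H(\bp)-D(\bp\|\bq)$ to $\sum_i (p_i-q_i)\log q_i$ and then apply Abel summation against the majorization partial sums. Your handling of zero entries is also right: a zero tail $q_{m+1}=\dots=q_n=0$ of the sorted $\bq$ forces $P_m=1$, hence a matching zero tail of $\bp$, so no infinite terms arise.

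Your caveat is a genuine correction. As transcribed in the paper, the statement omits the similar-ordering hypothesis. Your example $\bp=(1,0)$, $\bq=(0.4,0.6)$ shows the inequality then fails, since $H(\bq)\le\log 2<\log 2.5=D(\bp\|\bq)$.

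This does not damage the paper. Its only use of the proposition is in Section~\ref{sec:min_entropy}, with $\N_\alpha(\bp)\preceq\N_{\alpha'}(\bp)$. Both distributions come from applying a non-increasing affine map componentwise to the same $\bp$, so both are ordered opposite to $\bp$ and hence similarly ordered. Your version of the statement therefore covers that application exactly.
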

Throughout this paper, we use $\log$ to denote the logarithm in base 2.

\section{ Yager's negation maximizes the  Kullback-Leibler divergence}\label{sec:KL_div}

In this section, we show that Yager's negator achieves the maximum Kullback-Leibler divergence $D(\bp\|\cdot)$,
where the maximum is computed over all independent 
(equivalently,
linear) negators $N_\alpha$ of $\bp$.
%under the standard element-wise assumption ($q_i=N_{\alpha}(p_i)$).
Formally, for any $\bp\in \cP_n$ it holds that
\begin{equation}    
    \label{eq:max_relative_entropy}
    \N_Y(\bp) = \argmax_{\substack{\N_\alpha, \ 0\leq \alpha\leq 1}} D(\bp\Vert \N_\alpha(\bp)).
\end{equation}

We  prove \eqref{eq:max_relative_entropy} in the following theorem.

\begin{theorem}\label{th:1}
    Let $\bp=(p_1,\dots,p_n)\in \cP_n$ be a probability distribution. Then, among all independent negators, Yager's negator $N_Y$ is  the one that achieves the maximum Kullback-Leibler divergence $D(\bp\Vert \N_Y(\bp))$.
\end{theorem}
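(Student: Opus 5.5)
Using the representation \eqref{eq:convex_combination} of independent negators established by \cite{klein2022some}, I will treat the statement as a one-dimensional optimization over $\alpha\in[0,1]$. Write $\bu=(1/n,\dots,1/n)$ and $\N_Y(\bp)=(\overline{p}_1,\dots,\overline{p}_n)$. Then
\begin{equation*}
\N_\alpha(\bp)=\alpha\,\bu+(1-\alpha)\,\N_Y(\bp),
\end{equation*}
so the family $\{\N_\alpha(\bp)\}_{\alpha\in[0,1]}$ is exactly the line segment joining $\N_Y(\bp)$ (at $\alpha=0$) and $\bu$ (at $\alpha=1$). The plan is to show that $f(\alpha)=D(\bp\Vert \N_\alpha(\bp))$ is maximized at $\alpha=0$. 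Concretely, I will show that $f$ is convex on $[0,1]$ and that $f(0)\ge f(1)$. A convex function on an interval attains its maximum at an endpoint, so these two facts settle the claim.

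\textbf{Convexity.} The Kullback--Leibler divergence $D(\bp\Vert\bq)$ is convex in its second argument for fixed $\bp$, since each term $-p_i\log q_i$ is convex in $q_i$. The map $\alpha\mapsto\N_\alpha(\bp)$ is affine, so $f$ is convex on $[0,1]$. A degenerate case needs a remark. If some $p_j=1$, then $\overline{p}_j=0$, so $f(0)=+\infty$. This is consistent with the claim, with the maximum taken in the extended sense, and it is the only situation in which the divergence becomes infinite.

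\textbf{Endpoint comparison.} This is the key step. At $\alpha=1$ we have $f(1)=D(\bp\Vert\bu)=\log n-H(\bp)$. At $\alpha=0$ we have $f(0)=-H(\bp)-\sum_i p_i\log\overline{p}_i$. So we must show
\begin{equation*}
\sum_{i=1}^n p_i\log\frac{1-p_i}{n-1}\le \log\frac1n .
\end{equation*}
By concavity of $\log$ (Jensen's inequality with weights $p_i$), the left side is at most
\begin{equation*}
\log\sum_i p_i\frac{1-p_i}{n-1}=\log\frac{1-\sum_i p_i^2}{n-1}.
\end{equation*}
Since $\sum_i p_i^2\ge 1/n$, we get $(1-\sum_i p_i^2)/(n-1)\le (1-1/n)/(n-1)=1/n$, which gives the inequality.

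An alternative route goes through Proposition~\ref{prop:divergence}. One notes that $\N_Y(\bp)=\bp T$ for the doubly stochastic matrix $T=(J-I)/(n-1)$, where $J$ is the all-ones matrix. By Proposition~\ref{prop:equivalence_maj}, this gives $\N_Y(\bp)\preceq\bp$. However, I expect the direct Jensen argument above to be cleaner.

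\textbf{Main obstacle.} The only real content is this endpoint comparison together with the infinite-divergence case. Once both are handled, convexity finishes the argument. For strictness, when $\bp\ne\bu$ the inequality $\sum_i p_i^2>1/n$ is strict, so $f(0)>f(1)$. Combined with convexity, this gives $f(\alpha)\le(1-\alpha)f(0)+\alpha f(1)<f(0)$ for every $\alpha\in(0,1]$, so Yager's negator is the unique maximizer. When $\bp=\bu$, all the negators coincide and $f\equiv 0$, so the statement holds trivially.
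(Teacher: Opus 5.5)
Your proof is correct. Its first step is the same as the paper's: convexity of $f(\alpha)=D(\bp\Vert\N_\alpha(\bp))$, obtained from convexity of $D(\bp\Vert\cdot)$ composed with the affine map $\alpha\mapsto\N_\alpha(\bp)$.

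The key step differs. The paper computes $f'(1)=-\chi^2(\bp\Vert\bu_n)/((n-1)\ln 2)\le 0$. Since $f'$ is non-decreasing, $f'\le 0$ on the whole interval, so $f$ is non-increasing, and strictly decreasing when $\bp\neq\bu_n$. This orders all independent negators by their divergence from $\bp$, mirroring Theorem~\ref{th:majorization}.

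You compare only the endpoint values $f(0)\ge f(1)$. Together with convexity this identifies the maximizer, which is exactly what the theorem asserts. By itself it does not give monotonicity, since a convex $f$ with $f(0)\ge f(1)$ may dip in between.

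The two arguments rest on the same elementary fact: $f'(1)\le 0$ is literally $\sum_i p_i\overline{p}_i\le 1/n$, which is your second inequality. What your version gains:
\begin{itemize}
\item it needs no derivatives;
\item it gives explicit uniqueness of the maximizer;
\item it handles the case $p_j=1$ cleanly, where $f(0)=+\infty$ and the paper's appeal to $f'(\alpha)$ at $\alpha=0$ is not literally available.
\end{itemize}

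You can also recover the paper's monotonicity at no extra cost. Run your Jensen step with $\N_\beta(\bp)$ in place of $\N_Y(\bp)$: this gives $\sum_i p_i\log N_\beta(p_i)\le\log\bigl(\beta/n+(1-\beta)\sum_i p_i\overline{p}_i\bigr)\le\log(1/n)$, that is, $f(\beta)\ge f(1)$ for every $\beta\in[0,1]$. A convex function minimized at the right endpoint is non-increasing.

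Drop the alternative route through Proposition~\ref{prop:divergence}; it cannot work here, for two reasons:
\begin{itemize}
\item that inequality bounds $D(\bp\Vert\bq)$ from above, which is the wrong direction for proving $f(0)\ge f(1)$;
\item it cannot hold as stated for oppositely ordered pairs such as $(\bp,\N_Y(\bp))$. For $n=2$, $\N_Y(\bp)$ is a permutation of $\bp$, so $H(\N_Y(\bp))-H(\bp)=0$, while $D(\bp\Vert\N_Y(\bp))>0$ whenever $\bp\neq\bu_2$.
\end{itemize}
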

\begin{proof}
    To prove our result, let us first 
    demonstrate that, for any fixed $\bp\in\cP_n$, the  {function 
    $$f(\alpha)=D(\bp\Vert \N_\alpha(\bp))$$
    is a convex function on the interval $[0,1]$.}
    {
        Recall from \eqref{eq:convex_combination} that $\N_\alpha(\bp)$ can be expressed as a convex combination of $\N_U(\bp)$ and $\N_Y(\bp)$, that is, 

        \begin{align}
            \N_\alpha(\bp) &= \alpha \N_U(\bp) + (1-\alpha) \N_Y(\bp)\nonumber\\
            &= \alpha(\N_U(\bp) -\N_Y(\bp))+\N_Y(\bp).\label{eq:N_alpha_thm4}
        \end{align}
        From \eqref{eq:N_alpha_thm4}, since $\bp$ is fixed, the mapping $\alpha\mapsto\N_\alpha(\bp)$ is affine in $\alpha$, i.e., 
        it is a linear transformation of $\alpha$ followed by a translation. Furthermore, the Kullback-Leibler divergence $D(\bp\Vert\bq)$ is a convex function with respect to its second argument $\bq$ \citep{CT}. Consequently, because the composition of a convex function with an affine mapping preserves convexity \citep{boyd2004convex}, it follows that $f(\alpha)=D(\bp\Vert \N_\alpha(\bp))$ is convex in $\alpha\in[0,1]$.

        Now, let us show that $f(\alpha)$ is non-increasing in $\alpha\in[0,1]$. For this purpose, we have to prove that its first derivative $f'(\alpha)$ is smaller than or equal to 0 for $\alpha\in[0,1]$. Due to the convexity of $f(\alpha)$, the derivative $f'(\alpha)$ is non-decreasing in $\alpha$, that is,
        \begin{equation}\label{eq:f_convexity}
            f'(\alpha)\leq f'(1), \quad\forall\alpha\in[0,1].
        \end{equation}
        
        Therefore, to establish that $f'(\alpha)\leq 0$ for $\alpha\in[0,1]$, we only need to show that $f'(1)\leq 0$. Let us explicitly compute the derivative $f'(\alpha)$. From (\ref{divdef}), we can rewrite $f(\alpha)=D(\bp\Vert \N_\alpha(\bp))$ as follows:
        \begin{align}
       f(\alpha) =& -\sum_{i=1}^n p_i \log N_\alpha(p_i) +
        \sum_{i=1}p_i\log p_i\nonumber\\=& -\sum_{i=1}^n p_i \log N_\alpha(p_i) -H(\bp).\nonumber
    \end{align}
    Since the term $H(\bp)$ is constant with respect to $\alpha$, and recalling from \eqref{eq:convex_combination} that $N_\alpha(p_i)=(1-\alpha)N_Y(p_i)+\alpha N_U(p_i)$ for each $i=1,\dots,n$, we have that 
    \begin{equation}\label{eq:f_derivative}
        f'(\alpha)=-\frac{1}{\ln 2}\sum_{i=1}^n p_i\frac{\frac{1}{n}-\frac{1-p_i}{n-1}}{N_\alpha(p_i)}.
    \end{equation}
    Since $N_1(p_i)=N_U(p_i)=1/n$ for all $i=1,\dots,n$, evaluating \eqref{eq:f_derivative} at $\alpha=1$ gives
    \begin{align}
        f'(1)&=-\frac{1}{\ln 2}\sum_{i=1}^n p_i\frac{\frac{1}{n}-\frac{1-p_i}{n-1}}{\frac{1}{n}}\nonumber\\
        &=-\frac{n}{\ln 2}\sum_{i=1}^n p_i\left(\frac{1}{n}-\frac{1-p_i}{n-1}\right)\nonumber\\
        &=-\frac{n}{\ln 2}\left(\frac{1}{n}\sum_{i=1}^n p_i - \frac{1}{n-1}\left(\sum_{i=1}^n p_i -p_i^2\right)\right).\label{eq:step_1_f'1}
    \end{align}
    Since $\sum_{i=1}^n p_i=1$, we simplify \eqref{eq:step_1_f'1}:
    \begin{align}
         f'(1)&=-\frac{n}{\ln 2}\left(\frac{1}{n} - \frac{1-\sum_{i=1}^n p_i^2}{n-1}\right)\nonumber\\
         &=-\frac{n}{\ln 2}\left(\frac{n-1-n(1-\sum_{i=1}^n p_i^2)}{n(n-1)}\right)\nonumber\\
         &=-\frac{1}{\ln 2}\left(\frac{n\sum_{i=1}^n p_i^2-1}{n-1}\right)\nonumber\\
         &=-\frac{1}{(n-1)\ln 2}\left(n\sum_{i=1}^n p_i^2-1\right).\label{eq:step_2_f'1}
    \end{align}
    Recall that the Pearson $\chi^2$-divergence between two distributions $\bp$ and $\bq$ is defined as $\chi^2(\bp\Vert\bq)=\sum_{i=1}^n \frac{p_i^2}{q_i}-1$ \citep{sason2016f}. Consequently, when $\bq$ is equal to the uniform distribution $\bu_n=(1/n,\dots,1/n)$, the definition simplifies to $\chi^2(\bp\Vert\bu_n)=n\sum_{i=1}^n p_i^2-1$. This allows us to rewrite the expression \eqref{eq:step_2_f'1} as 
    \begin{equation}
        f'(1)=-\frac{1}{(n-1)\ln 2}\chi^2(\bp\Vert\bu_n).\label{eq:final_step_f'1}
    \end{equation}
    Since $\chi^2(\bp\Vert\bu)\geq 0$, with equality if and only if $\bp=\bu_n$ \citep{sason2016f}, and, for $n\geq 2$, the factor $-\frac{1}{(n-1)\ln 2}$ is strictly negative, it follows from \eqref{eq:f_convexity} and \eqref{eq:final_step_f'1} that
    \begin{equation*}
        f'(\alpha)\leq f'(1)\leq 0, \quad\forall\alpha\in[0,1].
    \end{equation*}
    Consequently, the function $f(\alpha)=D(\bp\Vert\N_\alpha(\bp))$ is non-increasing on $\alpha\in[0,1]$, and the maximum is attained at $\alpha=0$. Thus, given that $\N_0(\bp)=\N_Y(\bp)$, we have shown that $\N_Y(\bp)$ achieves the maximum in
    \begin{equation*}
    \max_{\substack{\N_\alpha, \ 0\leq \alpha\leq 1}} D(\bp\Vert \N_\alpha(\bp)).
    \end{equation*}
    }
\end{proof}

We observe that the property established in \eqref{eq:max_relative_entropy} provides a compelling information-theoretic justification for regarding Yager's negator as the canonical independent negator.
Semantically, a negation should differ \lq\lq as much as possible" from the original. Since the Kullback-Leibler divergence measures how distinguishable one probability distribution is from another, the fact 
that $D(\bp\|\N_Y(\bp))\geq D(\bp\|\N_\alpha(\bp))$, for any $\alpha\in[0,1]$, implies that Yager's negator yields the \lq\lq farthest"
distribution from the input $\bp$. In other words,
Yager's negator creates the strongest semantic possible contrast between a distribution and its negation,
 among all independent negators.

\section{Yager's negator minimizes Schur-concave functions}\label{sec:min_entropy}

In this section, we build upon and expand the findings of \cite{klein2022some}. Specifically,  Klein proved the interesting result
that Yager's negator minimizes \textit{any}  $\phi$-entropy, in the class of all independent (equivalently,
linear)
negators (whose expression is in (\ref{eq:convex_combination})).
We recall that the $\phi$-entropy $H_\phi(\bp)$ of a probability distribution
$\bp=(p_1, \ldots , p_n)$ is defined as $H_\phi(\bp)=\sum_{i=1}^n\phi(p_i)$, where
$\phi:[0,1]\mapsto [0, \infty]$
is strictly concave on $[0, 1]$ (see Definition \ref{def:phi_entropy}).

In this section, we establish a much more general result: Yager's negator minimizes \textit{any} Schur-concave function, a much broader class of function than the $\phi$-entropies. %Furthermore, by leveraging Majorization Theory, we also derive explicit quantitative estimates of the distance between distinct negators.

The following theorem establishes the crucial majorization relationship between the probability distributions generated by two distinct independent negators.

\begin{theorem}\label{th:majorization}
Let $\bp\in\cP_n$ and let $N_\alpha$ and $N_{\alpha'}$, with $\alpha, \alpha'\in[0,1]$, with 
$\alpha' <\alpha$, be two arbitrary linear negators. Then, $\N_\alpha(\bp)$ is majorized by $\N_{\alpha'}(\bp)$, i.e.,
\begin{equation}\label{majalpha}
\N_\alpha(\bp)\preceq \N_{\alpha'}(\bp).
\end{equation}
\end{theorem}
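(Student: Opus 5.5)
The plan is to show that $\N_\alpha(\bp)$ is a convex combination of $\N_{\alpha'}(\bp)$ and the uniform distribution $\bu_n=(1/n,\dots,1/n)$. Then I will realize this combination as multiplication by a doubly stochastic matrix and invoke Proposition~\ref{prop:equivalence_maj}.

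First I write $N_\alpha(p_i)$ in a form that exposes the dependence on $\alpha$. From \eqref{eq:convex_combination},
\begin{equation*}
N_\alpha(p_i)=\frac{1}{n}+(1-\alpha)\left(\frac{1-p_i}{n-1}-\frac{1}{n}\right),
\end{equation*}
so $\N_\alpha(\bp)=\bu_n+(1-\alpha)\,\mathbf{d}$, where $\mathbf{d}=\N_Y(\bp)-\bu_n$ has zero sum. Since $0\le \alpha'<\alpha\le 1$, we have $1-\alpha'>0$. Setting $t=(1-\alpha)/(1-\alpha')\in[0,1)$ gives
\begin{equation*}
\N_\alpha(\bp)=t\,\N_{\alpha'}(\bp)+(1-t)\,\bu_n .
\end{equation*}

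Next I use that $\bu_n=\N_{\alpha'}(\bp)J$, where $J$ is the $n\times n$ matrix with all entries $1/n$. This holds because the entries of $\N_{\alpha'}(\bp)$ sum to one. Hence
\begin{equation*}
\N_\alpha(\bp)=\N_{\alpha'}(\bp)\,T,\qquad T=tI+(1-t)J .
\end{equation*}
The matrix $T$ has nonnegative entries, and each row and column sums to $t+(1-t)=1$, so it is doubly stochastic. Proposition~\ref{prop:equivalence_maj} then yields \eqref{majalpha}.

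There is no real obstacle here. The only points to check are that $\alpha'<1$, so that $t$ is well defined, and that both vectors lie in $\cP_n$, so that the sum condition in the definition of majorization holds. As an alternative check via partial sums, $\N_\alpha(\bp)$ and $\N_{\alpha'}(\bp)$ are sorted by the same permutation, because both are non-increasing affine functions of $p_i$. The partial sums of the sorted $\N_\alpha(\bp)$ are then $k/n+(1-\alpha)S_k$, where $S_k\ge 0$ is the corresponding partial sum of $\mathbf{d}$ sorted in non-increasing order. These partial sums are clearly increasing as $\alpha$ decreases, which confirms the result.
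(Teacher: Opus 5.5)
Your proof is correct, and your main argument takes a different route from the paper's.

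The paper works directly with the partial-sum definition of majorization. It sorts $\bp$ non-increasingly and notes via \eqref{mon} that both negations are then sorted non-decreasingly. Writing $\alpha'=\alpha-\epsilon$, it shows that the sum of the $k$ largest entries of $\N_{\alpha'}(\bp)$ exceeds that of $\N_\alpha(\bp)$ by $\epsilon\sum_{i=1}^k\bigl(\frac{1-p_{n-i+1}}{n-1}-\frac{1}{n}\bigr)$. This gap is nonnegative because $\bu_n\preceq\N_Y(\bp)$. That is exactly your ``alternative check'': in your notation the gap is $(\alpha-\alpha')S_k$.

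Your main argument instead factors $\N_\alpha(\bp)=\N_{\alpha'}(\bp)\,(tI+(1-t)J)$. It then uses only the easy direction of Proposition~\ref{prop:equivalence_maj}, the technique the paper reserves for Theorem~\ref{th:p_maj_N}.

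Your route has two advantages. It is permutation-free, so you never need the co-sorting observation. And the identity $\N_\alpha(\bp)=t\,\N_{\alpha'}(\bp)+(1-t)\,\bu_n$ says more than the bare majorization. For instance, concavity of $\phi$ alone gives $H_\phi(\N_\alpha(\bp))\ge t\,H_\phi(\N_{\alpha'}(\bp))+(1-t)\,H_\phi(\bu_n)$ with $t=\frac{1-\alpha}{1-\alpha'}$. Equivalently, this is Proposition~\ref{Proposition.1} with $\lambda(T)=1-t$. This explicit quantitative comparison between two negators does not follow from the paper's partial-sum argument by itself.

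The paper's argument, in exchange, is fully elementary. It stays within the definition of majorization and shows directly that each partial sum is monotone in $\alpha$.
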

\begin{proof}
Assume without loss of generality that $\bp$ is ordered in a non-increasing fashion, that is, $p_1\geq\dots\geq p_n$. Let $\alpha'=\alpha-\epsilon$.
The two probability distributions that we obtain from the application of the two negators
 $N_\alpha$ and $N_{\alpha'}$, are the following:

\begin{align*}
     \N_\alpha(\bp) = \Biggl(&\alpha\frac{1}{n} + (1-\alpha)\frac{1-p_1}{n-1},\dots,\\&\ \alpha\frac{1}{n} + (1-\alpha)\frac{1-p_n}{n-1}\Biggr)
\end{align*}
and
\begin{align*}
    \N_{\alpha'}(\bp) = \Biggl(&(\alpha-\epsilon)\frac{1}{n} + (1-\alpha+\epsilon)\frac{1-p_1}{n-1},\dots,\\&(\alpha-\epsilon)\frac{1}{n} + (1-\alpha + \epsilon)\frac{1-p_n}{n-1}\Biggr)
\end{align*}

Since $\bp$ is ordered in a non-increasing fashion, by (\ref{mon}) the probability distributions 
$\N_\alpha(\bp)$ and $\N_{\alpha'}(\bp)$ are ordered in a non-decreasing fashion. Therefore, in order to prove that $\N_\alpha(\bp)\preceq \N_{\alpha'}(\bp)$ we have to show that, for each $k=1,\dots,n$, it holds that
\begin{equation}
    \sum_{i=1}^k N_{\alpha}(p_{n-i+1}) \leq \sum_{i=1}^kN_{\alpha'}(p_{n-i+1}).\label{eq:to_prove}
\end{equation}
For an arbitrary $k\in\{1,\dots,n\}$, it holds that
\begin{align}
    \sum_{i=1}^kN_{\alpha'}(p_{n-i+1}) &= \sum_{i=1}^k \frac{\alpha - \epsilon}{n} \nonumber\\&\quad+ (1-\alpha+\epsilon)\frac{1-p_{n-i+1}}{n-1}\nonumber\\
    &=\sum_{i=1}^k \frac{\alpha}{n} + (1-\alpha)\frac{1-p_{n-i+1}}{n-1}\nonumber\\&\quad+ \epsilon\left(\frac{1-p_{n-i+1}}{n-1} - \frac{1}{n}\right)\nonumber\\
    &=\sum_{i=1}^k \frac{\alpha}{n} + (1-\alpha)\frac{1-p_{n-i+1}}{n-1}\nonumber\\&\quad+ \sum_{i=1}^k \epsilon\left(\frac{1-p_{n-i+1}}{n-1} - \frac{1}{n}\right)\nonumber\\
    &\geq \sum_{i=1}^k \frac{\alpha}{n} + (1-\alpha)\frac{1-p_{n-i+1}}{n-1}\label{in}\\
    &=\sum_{i=1}^k N_{\alpha}(p_{n-i+1}),\label{eq:arbitrary_k}
\end{align}
where the inequality (\ref{in}) follows from the fact that 
$$\sum_{i=1}^k \epsilon\left(\frac{1-p_{n-i+1}}{n-1} - \frac{1}{n}\right)\geq 0$$ since 
$$\left(\frac{1}{n},\dots,\frac{1}{n}\right) \preceq \left(\frac{1-p_n}{n-1},\dots,\frac{1-p_1}{n-1}\right).$$

Thereby, since (\ref{eq:arbitrary_k}) holds for any $k\in\{1,\dots,n\}$, we have that $\N_\alpha(\bp)\preceq\N_{\alpha'}(\bp)$.
\end{proof}

As an immediate consequence of the previous theorem, we have that the 
probability distribution generated by Yager's negator \textit{majorizes} 
the probability distribution produced by any other independent negator, as shown in the following corollary.

\begin{corollary}\label{cor:majorization}
For any independent negator $N_\alpha$ with $\alpha\in[0,1]$, and any $\bp\in\cP_n$, it holds that $\N_\alpha(\bp) \preceq \N_Y(\bp)$.
\end{corollary}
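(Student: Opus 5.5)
The plan is to read the corollary off Theorem~\ref{th:majorization} by taking $\alpha' = 0$, because $N_0 = N_Y$. There are two cases depending on $\alpha$.

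First, suppose $\alpha\in(0,1]$. Then $0<\alpha$, so Theorem~\ref{th:majorization} applies with $\alpha'=0$ and gives
\[
\N_\alpha(\bp)\preceq \N_0(\bp)=\N_Y(\bp).
\]

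Second, suppose $\alpha=0$. Then $\N_\alpha(\bp)=\N_Y(\bp)$. The claim $\N_Y(\bp)\preceq\N_Y(\bp)$ holds because majorization is reflexive: all partial sums of the sorted vectors coincide. This holds equally by Proposition~\ref{prop:equivalence_maj}, taking $T$ to be the identity matrix.

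The only point needing care is that Theorem~\ref{th:majorization} is stated for strict inequality $\alpha'<\alpha$, so the boundary case $\alpha=0$ must be handled separately. As shown above, that case is trivial, so I expect no real obstacle.

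It may also be worth noting that the corollary follows equally from transitivity of $\preceq$ along any chain $\alpha>\alpha'>0$. The direct choice $\alpha'=0$ is simpler, and it is the argument I would use.
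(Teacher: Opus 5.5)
Your proposal is correct and follows essentially the same route as the paper, which obtains the corollary by specializing Theorem~\ref{th:majorization} to $\alpha'=0$ and using $\N_0(\bp)=\N_Y(\bp)$. Your explicit treatment of the boundary case $\alpha=0$ via reflexivity of $\preceq$ is a small but worthwhile addition, since the paper's one-line proof does not address that the theorem requires $\alpha'<\alpha$.
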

\begin{proof}
The proof follows from (\ref{majalpha}) of Theorem \ref{th:majorization}, which  implies that for any 
$\alpha\in[0,1]$ it holds that
\begin{equation}\label{n1}
\N_\alpha(\bp) \preceq \N_0(\bp)=\N_Y(\bp)
\end{equation}
%Therefore, the negation associated with the negator having the smallest possible $\alpha$, i.e., $\alpha=0$, majorizes all the others. 
\end{proof}

Leveraging Corollary \ref{cor:majorization} and the fact that Schur-concave functions are order-reversing with respect to majorization (Definition \ref{def:schur}), we can conclude that Yager's negator minimizes any Schur-concave function among all independent negators, as established in the following corollary.

\begin{corollary}
\label{cor:phi_function}
Let $\bp\in\cP_n$.
For any independent negator $N_\alpha$ with $\alpha\in[0,1]$,  and for any Schur-concave function $\psi$, it holds that $\psi(\N_\alpha(\bp)) \geq \psi(\N_Y(\bp))$.
\end{corollary}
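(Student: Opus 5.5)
The argument is a direct combination of Corollary \ref{cor:majorization} with Definition \ref{def:schur}; no new computation is needed.

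Fix $\bp\in\cP_n$, an independent negator $N_\alpha$ with $\alpha\in[0,1]$, and a Schur-concave function $\psi:\cP_n\to\mathbb{R}_+$. Two membership facts must be checked first.

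\begin{itemize}
\item Both $\N_\alpha(\bp)$ and $\N_Y(\bp)$ lie in $\cP_n$, which is the domain of $\psi$. This holds because each is a negation in the sense of Section \ref{sec:pre}.
\item Concretely, the entries of $\N_\alpha(\bp)$ are nonnegative, being convex combinations of $1/n$ and $(1-p_i)/(n-1)\ge 0$.
\item The entries also sum to $\alpha+(1-\alpha)\frac{n-(\sum_i p_i)}{n-1}=\alpha+(1-\alpha)=1$.
\end{itemize}

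Next, invoke Corollary \ref{cor:majorization}, which gives $\N_\alpha(\bp)\preceq \N_Y(\bp)$. By Definition \ref{def:schur}, a Schur-concave $\psi$ is inverse-order preserving with respect to $\preceq$. Taking $\bp$ in that definition to be $\N_\alpha(\bp)$ and $\bq$ to be $\N_Y(\bp)$, we obtain directly $\psi(\N_\alpha(\bp))\ge \psi(\N_Y(\bp))$. Since $\alpha$ was arbitrary, Yager's negation attains the minimum of $\psi$ over the family $\{\N_\alpha(\bp)\}_{\alpha\in[0,1]}$.

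There is no genuine obstacle. The only point needing care is the domain check above, since Definition \ref{def:schur} is stated on $\cP_n$. The boundary case $\alpha=0$ gives equality trivially. As a side remark, the same chain combined with Theorem \ref{th:majorization} shows that $\alpha\mapsto\psi(\N_\alpha(\bp))$ is non-decreasing on $[0,1]$.
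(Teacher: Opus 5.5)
Your proposal is correct and is the paper's proof: Corollary~\ref{cor:majorization} gives $\N_\alpha(\bp)\preceq\N_Y(\bp)$, and the inverse-order-preserving property of Schur-concave functions in Definition~\ref{def:schur} then yields $\psi(\N_\alpha(\bp))\geq\psi(\N_Y(\bp))$. Your explicit check that $\N_\alpha(\bp)\in\cP_n$ (and the monotonicity side remark via Theorem~\ref{th:majorization}) is a correct addition that the paper leaves implicit.
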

\begin{proof}
It follows from Corollary \ref{cor:majorization} and Definition \ref{def:schur}.
\end{proof}

Since the Shannon entropy is a Schur-concave function, from  Corollary \ref{cor:phi_function} we get that for any arbitrary probability distribution $\bp\in\cP_n$, it holds that
\begin{equation}
  \forall \alpha\in[0,1] \quad  H(\N_\alpha(\bp)) \geq H(\N_Y(\bp)).
\end{equation}
Because Shannon entropy quantifies the overall uncertainty of a distribution, minimizing it implies that Yager's negator produces the most ``structured" possible
%and ``informative" 
negation among the entire class of independent negators. Thus, while the mathematical act of negation inherently redistributes probability mass and increases overall uncertainty (a property we analyze in greater detail
in Section \ref{sec:increasing}), Yager's approach ensures this unavoidable loss of information is minimized. Consequently, it preserves the maximum amount of information about the original distribution.

Furthermore, we observe that by leveraging the majorization relationship established in Theorem \ref{th:majorization}, the result obtained can be strengthened for specific functions. For instance, in the case of Shannon entropy, combining Theorem \ref{th:majorization} with Proposition \ref{prop:divergence} yields that, for any non-uniform probability distribution $\bp\in\cP_n$ and any pair of independent negators $N_\alpha$ and $N_{\alpha'}$ such that $0 \le \alpha' < \alpha \le 1$, the following holds:
\begin{align*}
    H(\N_\alpha(\bp)) &\ge H(\N_{\alpha'}(\bp)) + D(\N_{\alpha'}(\bp) \| \N_\alpha(\bp))\\ &> H(\N_{\alpha'}(\bp)),
\end{align*}
where the strict inequality holds since $D(\N_{\alpha'}(\bp) \| \N_\alpha(\bp))=0$ if and only if 
it were $\N_{\alpha'}(\bp)=\N_{\alpha}(\bp)$, which is not the case since $\alpha'\neq \alpha$.
%In Section \ref{sec:increasing},  Majorization Theory will allow  us to extend and improve Corollary \ref{cor:phi_function}, in the case in which  the function $\psi$ is the Shannon entropy.
% In fact,  from Proposition \ref{prop:equivalence_maj}, and Proposition \ref{Proposition.1}, we can strengthen the result of Corollary \ref{cor:phi_function}.

\section{Yager's negator maximizes the mutual  information}\label{sec:maximal_mutual_information}

As proved in the previous sections, Yager's negator exhibits several interesting properties within the class of independent negators. Specifically, Yager's negator maximizes the distance from the initial distribution in terms of Kullback-Leibler divergence (Section \ref{sec:KL_div}) and minimizes any Schur-concave function (Section \ref{sec:min_entropy}). This section presents another significant property of Yager's approach.

%In general, any negation process transforms a probability distribution into its negation.
In this section, we model the  transformation of a probability distribution into
its negation by means of 
a discrete memoryless communication channel \cite[Ch. 7]{CT}.
Under this formulation, every independent negator $N_\alpha$ yields an associated channel. We demonstrate that the channel associated with Yager's negator $N_Y$ is the one that achieves the maximum mutual information
between the channel input and the output. 

Let us start by formalizing our setting. Let $X$ and $Y$ be two discrete random variables taking values in the sample spaces $\mathcal{X}=\{x_1,\dots,x_n\}$ and $\mathcal{Y}=\{y_1,\dots,y_n\}$, respectively. The random variable $X$ represents the input of the channel, and it is distributed according to a  distribution $\bp=(p_1,\dots,p_n)\in\cP_n$; that is, $P(X=x_i)=p_i$ for each $i=1,\dots,n$. Conversely, the random variable $Y$ represents the output of the channel. The behavior of the channel is characterized by the transition probabilities ${P(Y=y_j\mid X=x_i)}$, which represent the conditional probability of observing the negated outcome $y_j$ given that the input is $x_i$. Because the channel acts as a negation operator, and any independent negator can be expressed as the convex combination of the uniform negator and Yager's negator (as shown in \cite{klein2022some}), it follows that for each independent negator $N_\alpha$ with $\alpha\in[0,1]$, the transition probabilities of its associated channel are defined as follows:
\begin{equation}\label{eq:transition_prob}
    P(Y=y_j\mid X=x_i) = \begin{cases}
        \frac{\alpha}{n} \quad&\mbox{if } j=i\\
        \frac{\alpha}{n} + \frac{1-\alpha}{n-1}\quad&\mbox{if } j\neq i.
    \end{cases}
\end{equation}
From \eqref{eq:transition_prob}, one can verify that the transition probabilities of the channel are perfectly consistent with the corresponding negation process. In fact, the output distribution of $Y$ corresponds to
\begin{align*}
    P(Y=y_j)&=\sum_{i=1}^n P(Y=y_j\mid X=x_i)P(X=x_i)\\
    &=\alpha\frac{1}{n}p_j \\&\quad+ \sum_{i\neq j}\Bigg(\alpha\frac{1}{n}+ (1-\alpha)\frac{1}{n-1}\Bigg)p_i\\
    &= \sum_{i=1}^n \alpha\frac{1}{n}p_i + \sum_{i\neq j} (1-\alpha)\frac{1}{n-1}p_i\\
    &=\frac{\alpha}{n} \sum_{i=1}^n p_i + \frac{1-\alpha}{n-1}\sum_{i\neq j}p_i\\
    &=\frac{\alpha}{n} + (1-\alpha)\frac{1-p_j}{n-1}\\
    &=N_\alpha(p_j)=N_\alpha(P(X=x_j)),
\end{align*}
which exactly yields the negated distribution $\N_\alpha(\bp)=(N_\alpha(p_1),\dots,N_\alpha(p_n))$.
Note that by substituting $\alpha=0$ and $\alpha=1$ into \eqref{eq:transition_prob}, we also recover the transition probabilities for the channels associated with Yager's negator and the uniform negator, respectively.

Having defined the class of channels that we want to consider, let us recall the concept of mutual information $I(X;Y)$ between the input $X$ and the output $Y$ of the communication channel. The mutual information $I(X;Y)$ quantifies the amount of information the output $Y$ gives about the input $X$, and is defined as follows \citep{CT}:
\begin{align}
    I(X;Y) = &\sum_{i=1}^n \sum_{j=1}^n P(Y=y_j\mid X=x_i)\nonumber\\&\quad P(X=x_i)\log \frac{P(Y=y_j\mid X=x_i)}{P(Y=y_j)}.\label{eq:mutual_information}
\end{align}
In the remainder of the section, we demonstrate that, for any 
distribution $\bp$ of $X$, the transition probabilities of the channel associated with Yager's negator $N_Y$ maximizes the quantity $I(X;Y)$, among all possible independent negators.

To prove our claim, for any $\alpha\in[0,1]$, let us denote with 
$Y_\alpha$ the output of the channel associated with the negator $N_\alpha$,
that is, the communication channel with transition probabilities given in (\ref{eq:transition_prob}). Consequently, $Y_\alpha$ is distributed according to $\N_\alpha(\bp)$ and $I(X;Y_{\alpha})$ corresponds to the mutual information evaluated under the transition probabilities (\ref{eq:transition_prob}).
%$P(Y_\alpha\mid X)$ that characterize the channel associated with the negator $N_\alpha$ (cfr. (\ref{eq:transition_prob})). 
We show the following property of  Yager's negator:% solves the following problem:
\begin{equation}\label{eq:prob_I(X;Y)}
    N_Y = \underset{N_\alpha,\:  0\leq\alpha\leq 1}{\argmax} I(X; Y_\alpha).
\end{equation}

The result \eqref{eq:prob_I(X;Y)} is formally proved in the following theorem.
\begin{theorem}\label{th:max_mutual_information}
    Let $X\in \mathcal{X}$ be a discrete random variable distributed according to a probability distribution $\bp\in\cP_n$. For any $\alpha\in[0,1]$, let $Y_\alpha\in\mathcal{Y}$ denote the output of the channel associated with the independent negator $N_\alpha$, given the input $X$. Then, we have 
    \begin{equation}
        I(X;Y_0)\geq I(X;Y_{\alpha}),
    \end{equation}
    where we recall that $Y_0$ is the output of the channel associated with Yager's negator ($N_Y=N_0$).
\end{theorem}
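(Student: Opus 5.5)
I would first write $I(X;Y_\alpha)=H(Y_\alpha)-H(Y_\alpha\mid X)$. The conditional entropy term is easy to handle, because each row of the transition matrix \eqref{eq:transition_prob} is the same vector up to permutation. Every row has one entry $\alpha/n$ and $n-1$ entries $\alpha/n+(1-\alpha)/(n-1)$. Hence $H(Y_\alpha\mid X)=H(\mathbf{r}_\alpha)$, where
\[
\mathbf{r}_\alpha=\Bigl(\tfrac{\alpha}{n},\tfrac{\alpha}{n}+\tfrac{1-\alpha}{n-1},\dots,\tfrac{\alpha}{n}+\tfrac{1-\alpha}{n-1}\Bigr),
\]
and this quantity does not depend on $\bp$. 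The output term is $H(Y_\alpha)=H(\N_\alpha(\bp))$.

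Since $H(Y_\alpha)$ is increasing in $\alpha$ by Corollary~\ref{cor:phi_function}, and $H(\mathbf{r}_\alpha)$ is also increasing in $\alpha$, the two terms compete. I would therefore not compare the terms separately but show that $g(\alpha)=I(X;Y_\alpha)$ is convex and non-increasing, exactly as in the proof of Theorem~\ref{th:1}.

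\textbf{Convexity.} Write the channel as $W_\alpha=\alpha W_U+(1-\alpha)W_Y$. Here $W_U$ is the channel whose rows are all uniform. $W_Y$ has zero diagonal and off-diagonal entries $1/(n-1)$. The map $\alpha\mapsto W_\alpha$ is affine. For a fixed input distribution, mutual information is a convex function of the transition matrix \citep{CT}, and convexity is preserved under composition with an affine map. So $g$ is convex on $[0,1]$.

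\textbf{Reduction to the endpoint.} By convexity, a convex function on an interval attains its maximum at an endpoint. It therefore suffices to show $g(0)\ge g(1)$, which is simpler than bounding the derivative. At $\alpha=1$ every row of $W_1$ is uniform, so $Y_1$ is independent of $X$ and $g(1)=I(X;Y_1)=0$. Since mutual information is nonnegative, $g(0)=I(X;Y_0)\ge 0=g(1)$. Convexity then gives, for every $\alpha\in[0,1]$,
\[
g(\alpha)\le \alpha\, g(1)+(1-\alpha)\,g(0)=(1-\alpha)\,g(0)\le g(0).
\]
This proves $I(X;Y_0)\ge I(X;Y_\alpha)$. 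This argument also makes the entropy decomposition from the first paragraph unnecessary.

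\textbf{Main obstacle.} The key step is justifying convexity of mutual information in the channel for fixed input. This is standard: it is Theorem 2.7.4 in \citep{CT}. It follows from the joint convexity of the Kullback--Leibler divergence. Specifically, $I(X;Y)=D(p(x)W(y|x)\,\|\,p(x)\,(\bp W)(y))$, and both arguments of this divergence are affine in $W$. I would cite it and spell out this one-line reason. I would also note that the boundary case $\alpha=0$ causes no issue: the zero entries of $W_0$ contribute nothing under the convention $0\log 0=0$.
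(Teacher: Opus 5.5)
Your proof is correct and is essentially the paper's argument: you write the channel as $W_\alpha=\alpha W_U+(1-\alpha)W_Y$, use convexity of mutual information in the transition probabilities for a fixed input, and use $I(X;Y_1)=0$ to conclude $I(X;Y_\alpha)\le(1-\alpha)I(X;Y_0)\le I(X;Y_0)$. The entropy decomposition in your first paragraph is not needed, as you note yourself; also, monotonicity of $H(Y_\alpha)$ in $\alpha$ follows from Theorem~\ref{th:majorization}, not Corollary~\ref{cor:phi_function}, which only compares against $\alpha=0$.
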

\begin{proof}
    The proof relies on the property that, given an input distribution $P(X)$, the mutual information $I(X;Y)$ is a convex function of the conditional transition probabilities $P(Y\mid X)$ (see \cite{mceliece2002theory}).

    Let $Y_0$ and $Y_1$ be the outputs of the channels associated with Yager's negator ($\alpha=0$) and the uniform negator ($\alpha=1$), respectively. We recall that their transition probabilities $ P(Y_0=y_j\mid X=x_i)$ and $ P(Y_1=y_j\mid X=x_i)$ are defined as follows:
    \begin{equation*}
        P(Y_0=y_j\mid X=x_i)=\begin{cases}
            0\quad&\mbox{if } i=j\\
            \frac{1}{n-1}\quad&\mbox{if } i\neq j,
        \end{cases}
    \end{equation*}
    and
    \begin{equation*}
         P(Y_1=y_j\mid X=x_i) = \frac{1}{n} \quad\forall i,j.
    \end{equation*}

From   \eqref{eq:transition_prob}, the transition probabilities  ${P(Y_\alpha\mid X)}$ for any independent negator $N_\alpha$ with $\alpha\in[0,1]$ can be expressed as a convex combination of the transition probabilities ${P(Y_0=y_j\mid X=x_i)}$ and ${P(Y_1=y_j\mid X=x_i)}$. That is, for every $i$ and $j$:
    
    \begin{align}\label{eq:convex_prob}
        P(Y_\alpha =y_j\mid X=& x_i)=\nonumber \alpha P(Y_1=y_j\mid X=x_i)\nonumber\\&\quad+ (1-\alpha)P(Y_0=y_j\mid X=x_i).
    \end{align}

    Consequently, from \eqref{eq:convex_prob} and since the mutual information $I(X;Y_\alpha)$ is convex in the transition probabilities $P(Y_\alpha\mid X)$ \cite[Theorem 1.7]{mceliece2002theory}, we obtain
    \begin{equation}\label{eq:upper_bound_mutual}
        I(X;Y_\alpha)\leq \alpha I(X;Y_1) + (1-\alpha)I(X;Y_0).
    \end{equation}
    We observe that for the uniform negator, the transition probabilities $P(Y_1=y_j\mid X=x_i) = 1/n$ are completely independent of the input $X$. Consequently, the output $Y_1$  preserves no information about the input $X$, and \eqref{eq:mutual_information}  yields $I(X;Y_1)=0$. Substituting this result into \eqref{eq:upper_bound_mutual},  we obtain
    \begin{equation*}
        I(X;Y_\alpha)\leq (1-\alpha)I(X;Y_0)\leq I(X;Y_0),
    \end{equation*}
    which concludes the proof.
\end{proof}

The result of Theorem \ref{th:max_mutual_information} offers an interesting interpretation. By maximizing the mutual information, Yager's negator effectively acts as the least ``noisy" channel (or operator) within the class of independent negators. This reveals an interesting duality: although Yager's negator pushes the distribution as far away as possible from the original (as demonstrated in Section \ref{sec:KL_div}), it does so in the most structured and least destructive way. Consequently, among all independent negators, Yager's approach achieves the maximum distance while preserving the highest possible amount of information about the initial distribution.

\section{Quantifying the increase of uncertainty under negation}\label{sec:increasing}

In this section, we address the following question: how much does the \textit{uncertainty} of a probability distribution increase under the application of a negation? Previous studies \citep{klein2022some,Yager2015} have shown that applying an independent negator generally increases uncertainty, as quantified by various entropy measures. In this section,
we take this analysis a step further. First, through a straightforward Majorization Theory argument, we demonstrate that the application of an independent negator generates an increase in uncertainty 
as measured by an entire class of Schur-concave functions. Subsequently, focusing on the narrower class of $\phi$-entropies, we strengthen this result by explicitly quantifying the amount by which the entropy increases.

Moreover, since repeated applications of a negator are known to converge to the uniform distribution \citep{generating,Yager2015},  we also quantify  the rate at which iterative negations converge to the uniform distribution.%for the class of $\phi$-entropies.

Let us start by demonstrating that the value of any Schur-concave function increases under the application of an independent negator. For this purpose, we show that the probability distribution obtained through an independent negator is majorized by its initial, un-negated,  distribution.

\begin{theorem}\label{th:p_maj_N}
    Let $\bp\in \cP_n$ be an arbitrary probability distribution and let $N_{\alpha}$ be an arbitrary independent negator with $\alpha\in[0,1]$. Then,  the negation $\N_\alpha(\bp)$ is majorized by $\bp$, that is,  $\N_{\alpha}(\bp) \preceq \bp$.
\end{theorem}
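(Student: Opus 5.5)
Our plan is to use Proposition~\ref{prop:equivalence_maj} and exhibit an explicit doubly stochastic matrix $T_\alpha$ with $\N_\alpha(\bp)=\bp\,T_\alpha$. The channel description of Section~\ref{sec:maximal_mutual_information} already provides it. Take $T_\alpha$ to be the $n\times n$ transition matrix of \eqref{eq:transition_prob}: diagonal entries $\alpha/n$ and off-diagonal entries $\alpha/n+(1-\alpha)/(n-1)$. Equivalently,
\begin{equation*}
T_\alpha=\alpha\,\tfrac{1}{n}J+(1-\alpha)\,\tfrac{1}{n-1}(J-I),
\end{equation*}
where $J$ is the all-ones matrix and $I$ the identity.

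The steps, in order, are as follows.
\begin{enumerate}
\item Check that $T_\alpha$ is doubly stochastic. All entries are nonnegative for $\alpha\in[0,1]$. Each row (and, by symmetry, each column) sums to $\alpha/n+(n-1)\bigl(\alpha/n+(1-\alpha)/(n-1)\bigr)=\alpha+(1-\alpha)=1$. Alternatively, note that both $\frac1n J$ and $\frac1{n-1}(J-I)$ are doubly stochastic, and doubly stochastic matrices form a convex set.
\item Verify $(\bp T_\alpha)_j=N_\alpha(p_j)$. This is exactly the output-distribution computation carried out just before \eqref{eq:mutual_information}, so it can be cited rather than redone. Concretely, $\bp J=(1,\dots,1)$ because $\sum_i p_i=1$, and $\bp(J-I)=(1-p_1,\dots,1-p_n)$.
\item Apply the ``if'' direction of Proposition~\ref{prop:equivalence_maj} with $\x=\N_\alpha(\bp)$ and $\y=\bp$ to conclude $\N_\alpha(\bp)\preceq\bp$.
\end{enumerate}

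There is no real obstacle; the argument needs $n\ge2$ so that $1/(n-1)$ is defined. As a sanity check, or as an alternative route, one could argue via Theorem~\ref{th:majorization}. Since $\N_\alpha(\bp)\preceq\N_Y(\bp)$ and majorization is transitive, it suffices to show $\N_Y(\bp)\preceq\bp$. This follows from the same matrix argument with $\alpha=0$, or directly: with $p_1\ge\dots\ge p_n$, the $k$ largest entries of $\N_Y(\bp)$ sum to $\bigl(k-\sum_{i=n-k+1}^n p_i\bigr)/(n-1)$. One then needs this to be at most $\sum_{i=1}^k p_i$, which reduces to an averaging inequality between top and bottom partial sums. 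The doubly stochastic route avoids that computation, so it is the one I will present.
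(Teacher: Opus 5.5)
Your proof is correct and rests on the same tool as the paper: you exhibit a doubly stochastic matrix and invoke Proposition~\ref{prop:equivalence_maj}. The only difference is that the paper first reduces to $\alpha=0$ via Corollary~\ref{cor:majorization} and transitivity and then uses the Yager matrix ($0$ on the diagonal, $\frac{1}{n-1}$ elsewhere); your ``alternative route'' is exactly that proof. Your main argument instead applies $A_\alpha$ directly (the same matrix the paper uses later in the proof of Theorem~\ref{th:lb_alpha}), which makes it self-contained and independent of Theorem~\ref{th:majorization}.
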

\begin{proof}
    By Corollary \ref{cor:majorization}, we know that $\N_\alpha(\bp)\preceq \N_Y(\bp)$ for any independent negator. Therefore, due to the transitivity of majorization, it suffices to prove that $\N_{Y}(\bp) \preceq \bp$ to establish that  $\N_{\alpha}(\bp) \preceq \bp$.

    For this purpose, according to Proposition \ref{prop:equivalence_maj}, proving that $\N_Y(\bp) \preceq \bp$ is equivalent to showing that there exists a doubly-stochastic matrix $T\in\mathbb{R}_+^{n\times n}$ such that $\N_Y(\bp)=\bp T$. Let $T$ be defined as follows:
    \begin{equation}
    T_{ij} = 
    \begin{cases}
        0 &\text{if } i=j,\\
        \frac{1}{n-1} &\text{otherwise.}
    \end{cases}
\end{equation}
    By construction, the sum of each row and each column of $T$ is equal to $1$, confirming that $T$ is doubly-stochastic. Furthermore, 
    from formula (\ref{defY}) one can immediately verify that the matrix $T$ satisfies: $\N_Y(\bp) = \bp T$. Consequently, by Proposition \ref{prop:equivalence_maj}, it holds that $\N_Y(\bp)\preceq \bp$, concluding the proof.
\end{proof}

Given two arbitrary probability distributions $\bp,\bq\in\cP_n$, it is well known  \citep{marshall1979inequalities} that the 
relation $\bp\preceq\bq$ is a mathematical rigorous formalization of the 
intuitive concept that $\bp$ is \lq\lq more flat than" $\bq$.
Therefore, the majorization relationship established in Theorem \ref{th:p_maj_N} provides a formalization of the intuitive concept that a negation intrinsically acts as a smoothing operator. By reallocating probability mass from highly likely outcomes to less likely alternatives, any independent negator effectively ``flattens" the original distribution. Since Schur-concave functions are, by definition, order-reversing with respect to majorization (see Definition \ref{def:schur}), the result of Theorem \ref{th:p_maj_N} guarantees that the value of a Schur-concave function can only increase upon negation. This leads directly to the following result.

\begin{corollary}\label{cor:schur_concave_increase}
    Let $\bp\in \cP_n$ be an arbitrary probability distribution and let $N_{\alpha}$ be an arbitrary independent negator with $\alpha\in[0,1]$. Then, for any Schur-concave function $\psi$, it holds that $\psi(\N_\alpha(\bp))\geq \psi(\bp)$.
\end{corollary}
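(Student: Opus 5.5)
This is an immediate consequence of Theorem \ref{th:p_maj_N} combined with Definition \ref{def:schur}, so the proof reduces to one invocation.

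Fix $\bp\in\cP_n$ and $\alpha\in[0,1]$, and let $\psi$ be any Schur-concave function.

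First, I will note that $\N_\alpha(\bp)$ is a genuine element of $\cP_n$. Its entries are convex combinations of $1/n$ and $(1-p_i)/(n-1)$, all of which lie in $[0,1]$. The entries sum to $\alpha+(1-\alpha)(n-1)/(n-1)=1$. Hence $\psi(\N_\alpha(\bp))$ is well defined, since $\psi$ has domain $\cP_n$.

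Second, I will invoke Theorem \ref{th:p_maj_N}, which gives $\N_\alpha(\bp)\preceq\bp$.

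Third, I will apply Definition \ref{def:schur} with the pair $(\N_\alpha(\bp),\bp)$ in the roles of $(\bp,\bq)$. Schur-concavity means $\psi$ reverses the majorization order, so $\N_\alpha(\bp)\preceq\bp$ yields $\psi(\N_\alpha(\bp))\geq\psi(\bp)$, which is the claim.

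There is no real obstacle here, since all the work was done in Theorem \ref{th:p_maj_N}. The only point needing attention is matching the direction of the inequality in Definition \ref{def:schur}: the majorized vector, which is the flatter one, receives the larger value of $\psi$.

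As a remark, one can also note the equality cases. When $\bp=\bu_n$, every negator fixes $\bp$, so equality holds for every $\psi$. For strictly Schur-concave $\psi$, equality should occur only when $\N_\alpha(\bp)$ is a permutation of $\bp$. This refinement is not needed for the statement as given.
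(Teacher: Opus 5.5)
Your proposal is correct and is the same argument as the paper's: the result follows directly from Theorem~\ref{th:p_maj_N} ($\N_\alpha(\bp)\preceq\bp$) combined with the order-reversing property in Definition~\ref{def:schur}. Your explicit check that $\N_\alpha(\bp)\in\cP_n$, so that $\psi(\N_\alpha(\bp))$ is defined, is a sensible addition that the paper leaves implicit.
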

\begin{proof}
    It follows from Theorem \ref{th:p_maj_N} and Definition \ref{def:schur}.
\end{proof}

%% Riscrivere in termini di phi-entropies

Shifting our focus from the broad class of Schur-concave functions to the subclass of $\phi$-entropies, we can strengthen the result of Corollary \ref{cor:schur_concave_increase}. Recall that the $\phi$-entropy $H_\phi(\bp)$ of a probability distribution $\bp=(p_1,\dots,p_n)\in\cP_n$ is defined as $H_\phi(\bp)=\sum_{i=1}^n\phi(p_i)$, where
$\phi:[0,1]\to [0, \infty]$ is a strictly concave function on $[0, 1]$ (see Definition \ref{def:phi_entropy}).

In what follows, we provide a quantitative estimate on the $\phi$-entropy increase resulting from successive applications of an
arbitrary independent negator.

\begin{theorem}\label{th:lb_alpha}
    Let $\bp\in \cP_n$ be an arbitrary probability distribution and $H_\phi$ be an arbitrary $\phi$-entropy. For any $\alpha\in[0,1]$, let $\N_\alpha^{(i)}(\bp)$ be the probability distribution obtained after $i$ applications of the independent negator $N_\alpha$ to $\bp$. Then, it holds that
\begin{align}\label{eq:lb_4}
    H_\phi(\N_\alpha^{(i)}(\bp)) \geq &\left(1-\left(\frac{1-\alpha}{n-1}\right)^i\right)H_\phi(\bu_n) \nonumber \\ &+\left(\frac{1-\alpha}{n-1}\right)^i H_\phi(\bp),
\end{align}
where $\bu_n=(1/n,\dots,1/n)\in\cP_n$ denotes the uniform distribution.
\end{theorem}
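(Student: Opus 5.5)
The plan is to write the $i$-fold negation as multiplication by a doubly stochastic matrix, compute the coefficient $\lambda(\cdot)$ of Proposition~\ref{Proposition.1} for that matrix explicitly, and then apply Proposition~\ref{Proposition.1} to the convex function $\psi=-\phi$.

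First, as in the proof of Theorem~\ref{th:p_maj_N}, I would express one application of $N_\alpha$ as $\N_\alpha(\bp)=\bp A$. Let $J$ denote the all-ones $n\times n$ matrix and $I$ the identity. Yager's matrix is $T=(J-I)/(n-1)$, and the uniform negator corresponds to $J/n$, because $\bp J/n=\bu_n$. By \eqref{eq:convex_combination},
$$A=\alpha\frac{J}{n}+(1-\alpha)\frac{J-I}{n-1}=\beta I+(1-\beta)\frac{J}{n},\qquad \beta:=-\frac{1-\alpha}{n-1}.$$
The coefficient of $J/n$ is forced to be $1-\beta$, since the rows of $A$ sum to $1$. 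The matrix $A$ is a convex combination of doubly stochastic matrices, so it is doubly stochastic. Hence $\N^{(i)}_\alpha(\bp)=\bp A^i$, and $A^i$ is also doubly stochastic.

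Next I would compute $A^i$ in closed form. The matrices $I$ and $J/n$ commute, and $J/n$ is idempotent. Expanding the power, every term that contains at least one factor $J/n$ collapses to a multiple of $J/n$. Since the total coefficient must be $1$, this gives
$$A^i=\beta^i I+(1-\beta^i)\frac{J}{n}.$$
(An induction on $i$ would verify this just as easily.)

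The step needing most care is evaluating
$$\lambda(A^i)=1-\tfrac12\max_{j,k}\sum_l |A^i_{jl}-A^i_{kl}|.$$
Take two distinct rows $j\neq k$ of $A^i$. They agree everywhere except in columns $j$ and $k$, where they differ by $\pm\beta^i$. So the inner sum equals $2|\beta|^i$. Because $\beta$ is negative, $\beta^i$ alternates in sign, and one must keep the absolute value here. Therefore
$$\lambda(A^i)=1-|\beta|^i=1-\left(\frac{1-\alpha}{n-1}\right)^i.$$
This lies in $[0,1]$ for $n\ge 2$, which makes the resulting bound a genuine convex combination.

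Finally, I would apply Proposition~\ref{Proposition.1} with $\y=\bp$, $\x=\bp A^i$ and the convex function $\psi=-\phi$:
$$-\sum_l\phi(x_l)\le -\lambda(A^i)\,n\,\phi(1/n)-(1-\lambda(A^i))\sum_l\phi(p_l).$$
Multiplying by $-1$, and using $n\phi(1/n)=H_\phi(\bu_n)$, gives exactly \eqref{eq:lb_4}. The only technicality is that $\phi$ is defined on $[0,1]$ rather than on all of $\mathbb{R}$. This is harmless, because every argument involved lies in $[0,1]$; alternatively, $\phi$ can be extended to a concave function on $\mathbb{R}$ (say by $-\infty$ outside $[0,1]$), so the proposition applies without change.
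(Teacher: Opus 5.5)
Your proof is correct, but it is organized differently from the paper's.

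\textbf{The paper's route.} The paper never forms $A_\alpha^i$. It computes only the one-step coefficient $\lambda(A_\alpha)=1-\frac{1-\alpha}{n-1}$ and proceeds by induction on $i$. At each step it applies Proposition~\ref{Proposition.1} with $\y=\N_\alpha^{(i)}(\bp)$ and $\x=\N_\alpha^{(i+1)}(\bp)$. It then chains the resulting inequalities, which implicitly uses $1-\lambda(A_\alpha)\ge 0$ when the inductive hypothesis is multiplied through.

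\textbf{Your route.} You exploit the fact that $A_\alpha$ lies in the commutative algebra spanned by $I$ and $J/n$. This gives $A_\alpha^i=\beta^iI+(1-\beta^i)J/n$, and you apply Proposition~\ref{Proposition.1} once, with the correct coefficient $1-\lambda(A_\alpha^i)=|\beta|^i$. Your care with the sign of $\beta$ is exactly what is needed there.

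\textbf{Comparison.} The two coefficients coincide, since $|\beta|^i=(1-\lambda(A_\alpha))^i$, so the paper's chaining loses nothing in this case. In general, a single application to the product can only be at least as good, because $1-\lambda(\cdot)$ (Dobrushin's coefficient) is submultiplicative.

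Your route also yields more than the theorem. The identity $\N_\alpha^{(i)}(\bp)=\bu_n+\beta^i(\bp-\bu_n)$ describes the iterates exactly: they oscillate around $\bu_n$, with $\|\N_\alpha^{(i)}(\bp)-\bu_n\|=|\beta|^i\|\bp-\bu_n\|$ in any norm. This gives the convergence statement more directly than the paper's later argument, which derives it from the Shannon-entropy limit.

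The paper's induction, on the other hand, needs no closed form for matrix powers. It only uses a per-step contraction coefficient, so it would carry over to settings without the special $I$, $J/n$ structure.

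\textbf{One small caveat.} Your alternative justification, extending $\phi$ by $-\infty$ outside $[0,1]$, does not fit Proposition~\ref{Proposition.1} as stated, since that result requires a real-valued convex $\psi$. For example, $-x\log x$ has infinite slope at $0$ and admits no finite concave extension past $0$. Rely instead on your first justification, that all points involved lie in $[0,1]$; this is also how the paper handles the domain issue.
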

\begin{proof}
    {We prove \eqref{eq:lb_4} by induction on $i$ using Proposition \ref{Proposition.1}.}
    We recall that Proposition \ref{Proposition.1} states that for any vector $\y\in\mathbb{R}_+^n$ with $\sum_{\ell=1}^n y_{\ell}=1$ and $\x=\y T$ for some doubly stochastic matrix $T\in\mathbb{R}_+^{n\times n}$, the following inequality holds for any convex function $\psi:\mathbb{R}\to \mathbb{R}$:
    \begin{equation}\label{eq:prop2recall_2}
        \sum_{\ell=1}^n\psi(x_{\ell})\leq \lambda(T)n\psi\left(\frac{1}{n}\right)+(1-\lambda(T))\sum_{\ell=1}^n\psi(y_{\ell}),
    \end{equation}
    where the coefficient $\lambda(T)$ is defined as $\lambda(T)=1-\frac{1}{2}\max_{j,k}\sum_{\ell=1}^n|T_{j\ell}-T_{k\ell}|$.

    \smallskip
    {
    \noindent
    \textbf{Base case} ($i=1$): We prove inequality \eqref{eq:lb_4} for $i=1$ using Proposition \ref{Proposition.1}.
    To apply \eqref{eq:prop2recall_2} of Proposition \ref{Proposition.1}, we  need to
    specify the doubly stochastic matrix $T$ for which  $\N_\alpha^{(1)}(\bp)=\bp T$. For this purpose, we recall that any independent negator $N_\alpha$ can be expressed as
    $N_\alpha = (1-\alpha)N_Y + \alpha N_U$ (see \eqref{eq:convex_combination}). Consequently, the application of the negator $N_\alpha$ can be represented by right-multiplying the distribution
    $\bp$ by the matrix $A_\alpha = (1-\alpha)A_Y + \alpha A_U$, where $A_Y$ is the doubly stochastic matrix associated with Yager's negator (defined as $(A_Y)_{\ell m}=\frac{1}{n-1}$ for $\ell\neq m$ and $0$ otherwise), and $A_U$ is the matrix, where all entries are equal to $1/n$. Thus, we can write the components of $A_\alpha$ as
    \begin{equation}\label{eq:A_alpha_def}
        (A_\alpha)_{k\ell} = \begin{cases}
            \frac{\alpha}{n}&\quad\mbox{if } k=\ell,\\
            \frac{\alpha}{n} + \frac{1-\alpha}{n-1}&\quad\mbox{if } k\neq\ell.
        \end{cases}
    \end{equation}
    We recall from Birkhoff's theorem \citep[Thm. A.2]{marshall1979inequalities} that the set of doubly stochastic matrices is the convex hull of permutation matrices. Therefore, since $A_\alpha$ is the convex combination of doubly stochastic matrices, it follows that $A_\alpha$ is itself a doubly stochastic matrix. Hence, $A_\alpha$ is the doubly stochastic matrix that we need to apply Proposition \ref{Proposition.1}. Let us evaluate the coefficient
    \begin{equation}\label{eq:coeff_def_A}
        \lambda(A_\alpha)=1-\frac{1}{2}\max_{j,k}\sum_{\ell=1}^n|(A_\alpha)_{j\ell}-(A_\alpha)_{k\ell}|.
    \end{equation}
    Specifically, for any two distinct rows $j$ and $k$, we must compute the absolute difference between their entries in each column. 
    From \eqref{eq:A_alpha_def}, for any $j\neq k$, we have that
    \begin{equation}
        |(A_\alpha)_{j\ell}-(A_\alpha)_{k\ell}|=\begin{cases}
            \frac{1-\alpha}{n-1} &\quad\mbox{if } \ell=k \mbox{ or } \ell=j,\\
            0 &\quad\mbox{otherwise}.\nonumber
        \end{cases}
    \end{equation}
    Summing over all column indices $\ell$, for any $j\neq k$, it follows that
    \begin{equation}\label{eq:sum_def}
        \sum_{\ell=1}^n|(A_\alpha)_{j\ell}-(A_\alpha)_{k\ell}| = 2\left(\frac{1-\alpha}{n-1}\right).
    \end{equation}
    Thus, from \eqref{eq:coeff_def_A} and \eqref{eq:sum_def} we have that
    \begin{equation}
        \lambda(A_\alpha)=1-\frac{1}{2}2\left(\frac{1-\alpha}{n-1}\right)=1-\frac{1-\alpha}{n-1}.
    \end{equation}
    We can now apply inequality \eqref{eq:prop2recall_2} of Proposition \ref{Proposition.1} by setting $\y=\bp$, $\x=\N_\alpha^{(1)}(\bp)$, $T=A_\alpha$, and selecting the function $\psi(x)=-\phi(x)$. To justify the choice of $\psi(x)=-\phi(x)$, we note that because $\bp$ and $\N_\alpha^{(1)}(\bp)$ are probability distributions, their components, as well as $1/n$, lie in $[0,1]$. Thus, since $\phi(x)$ is a strictly concave function on $[0,1]$, it follows that $\psi(x)=-\phi(x)$ is a convex function on this same interval. From this substitution, we obtain
    \begin{align*}
    H_\phi(\N_\alpha^{(1)}(\bp)) &\geq \left(1-\frac{1-\alpha}{n-1}\right)H_\phi(\bu_n)\\&\quad+\left(\frac{1-\alpha}{n-1}\right) H_\phi(\bp),
    \end{align*}
    which proves the base case.

    \smallskip
    \noindent
    \textbf{Inductive step}: Assume that inequality \eqref{eq:lb_4} holds for $i$, we show that it holds for $i+1$. Note that $\N_\alpha^{(i+1)}(\bp)=\N_\alpha^{(i)}(\bp)A_\alpha$, where $A_\alpha$ is the doubly stochastic matrix defined in \eqref{eq:A_alpha_def}. Thereby, proceeding analogously to the base case, we can apply  Proposition \ref{Proposition.1} by setting $\y=\N_\alpha^{(i)}(\bp)$, $\x=\N_\alpha^{(i+1)}(\bp)$, $T=A_\alpha$, and selecting the function $\psi(x)=-\phi(x)$, which yields
    \begin{align}\label{eq:ind_step_1}
         H_\phi(\N_\alpha^{(i+1)}(\bp)) &\geq \left(1-\frac{1-\alpha}{n-1}\right)H_\phi(\bu_n)\nonumber\\&\quad+\left(\frac{1-\alpha}{n-1}\right) H_\phi(\N_\alpha^{(i)}(\bp)).
    \end{align}
    From the inductive hypothesis it holds that
    \begin{align}\label{eq:ind_H}
    H_\phi(\N_\alpha^{(i)}(\bp)) \geq &\left(1-\left(\frac{1-\alpha}{n-1}\right)^i\right)H_\phi(\bu_n) \nonumber \\ &+\left(\frac{1-\alpha}{n-1}\right)^i H_\phi(\bp),
    \end{align}
    Substituting \eqref{eq:ind_H} into  \eqref{eq:ind_step_1} gives
    \begin{align*}
         H_\phi(\N_\alpha^{(i+1)}(\bp)) &\geq \left(1-\frac{1-\alpha}{n-1}\right)H_\phi(\bu_n)\\&\!\!+\left(\frac{1-\alpha}{n-1}\right) \left(1-\left(\frac{1-\alpha}{n-1}\right)^i\right)H_\phi(\bu_n)\\
         &\!\!+\left(\frac{1-\alpha}{n-1}\right)^{i+1}H_\phi(\bp)\\
         &=\left(1-\left(\frac{1-\alpha}{n-1}\right)^{i+1}\right)H_\phi(\bu_n) \nonumber \\ &\quad+\left(\frac{1-\alpha}{n-1}\right)^{i+1} H_\phi(\bp),
    \end{align*}
    which completes the inductive step and concludes the proof.
    }
\end{proof}
We note that Theorem \ref{th:lb_alpha} provides a quantitative counterpart to the results established in  \citep{klein2022some}. Specifically, compared to \cite{klein2022some}, the theorem establishes an explicit lower bound for the value of an arbitrary $\phi$-entropy after $i$ successive applications of an independent negator, including Yager's negator ($\alpha=0$). 

Furthermore, the bound established in Theorem \ref{th:lb_alpha} provides a quantitative estimate  of the convergence rate of successive applications of an independent negator to the uniform distribution. {This convergence is illustrated in Figure \ref{fig:uniform_convergence}  for the specific case of Yager's negator}. To demonstrate this, let us analyze the behavior of the lower bound \eqref{eq:lb_4} for the specific case of Shannon entropy $H(\cdot)$ as the number of iterations $i \to \infty$ (though the same reasoning naturally extends to any other $\phi$-entropy). For any $n>2$, and $\alpha\in[0,1]$,  the term $\frac{1-\alpha}{n-1}$ lies in the interval $[0,1)$. Consequently, the term $\left(\frac{1-\alpha}{n-1}\right)^i$ vanishes as the integer $i$ approaches infinity. Therefore, from \eqref{eq:lb_4} we have
\begin{align*}
    \lim_{i \to \infty} H(\N_\alpha^{(i)}(\bp)) \geq&\lim_{i \to \infty} \Biggl(\left(1-\left(\frac{1-\alpha}{n-1}\right)^i\right)H(\bu_n) \\&\quad+\left(\frac{1-\alpha}{n-1}\right)^i H(\bp)\Biggr)\\
    =& (1-0)\log n + 0 \cdot H(\bp) = \log n.
\end{align*}
Since the Shannon entropy of any $n$-dimensional probability distribution is upper bounded by $\log n$, it follows that $\lim_{i \to \infty} H(\N_\alpha^{(i)}(\bp)) = \log n$. Moreover, because the uniform distribution is the unique distribution that maximizes the Shannon entropy, this confirms that the sequence of iterated distributions $\N_\alpha^{(i)}(\bp)$ converges asymptotically to the uniform distribution. Furthermore, the term $\left(\frac{1-\alpha}{n-1}\right)^i$ explicitly quantifies the rate of this convergence. {Figure \ref{fig:thm8} illustrates this rate of convergence comparing the exact value of the Shannon entropy with the corresponding lower bound provided in \eqref{eq:lb_4}.}

\begin{figure}[h]
    \centering
    \includegraphics[width=1\linewidth]{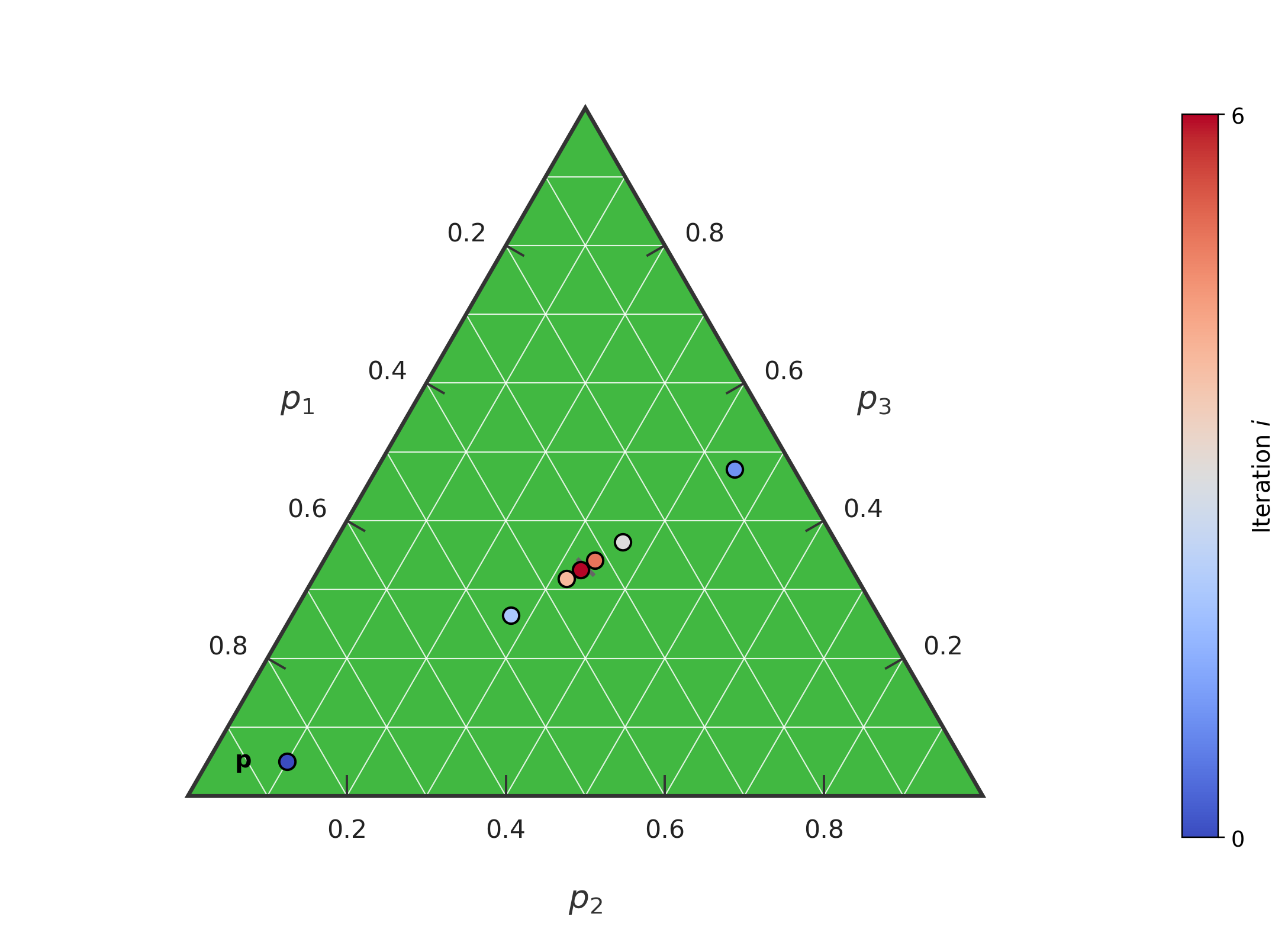}
    \caption{Visualisation of the dynamics of $\N_Y^{(i)}(\bp)$, for 
    $i=0,\dots,6$ and 
    $\bp=(0.85, 0.1, 0.05)$,  in the 2-dimensional simplex.}
    \label{fig:uniform_convergence}
\end{figure}

\begin{figure}[h]
    \centering
    \includegraphics[width=1\linewidth]{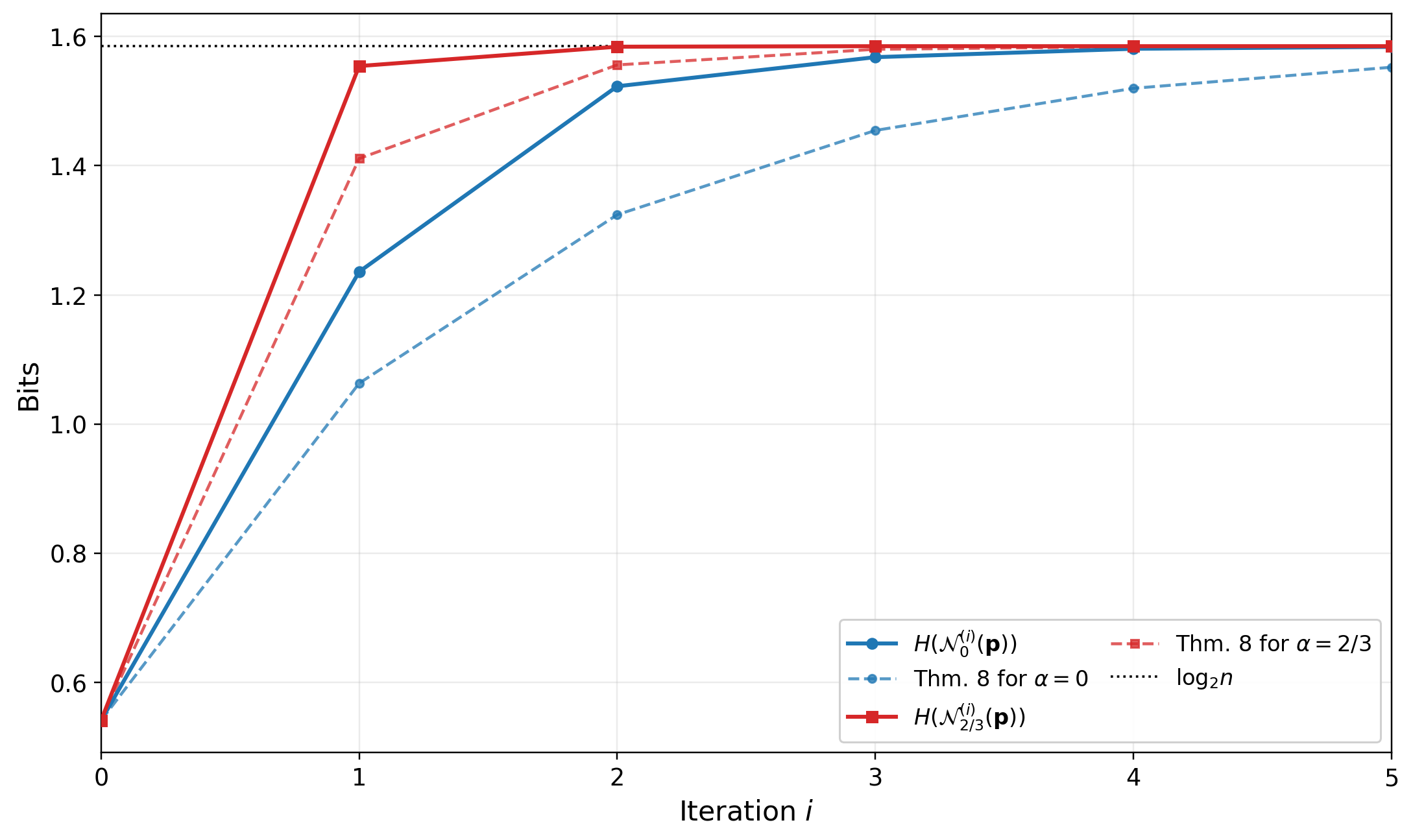}
    \caption{Comparison of the convergence rates to the uniform distribution under successive applications of independent negators ($\alpha=0$ and $\alpha=2/3$), starting from the initial distribution $\bp=(0.9, 0.08, 0.02)$.   The plot compares the exact Shannon entropy values (solid lines) with the corresponding lower bounds established in Theorem \ref{th:lb_alpha} (dashed lines). The $x$-axis denotes the number of iterations $i$ and the $y$-axis the corresponding entropy value.}
    \label{fig:thm8}
\end{figure}

\section{Conclusions}
In this work, we have presented a unified and comprehensive information-theoretic analysis of probability distribution negation, with particular emphasis on Yager’s formulation. By leveraging tools from information theory and majorization theory, we have established a set of strong and mutually reinforcing results that clarify the structural and operational role of negation in probabilistic settings.

Our analysis shows that, within the broad class of independent (equivalently, linear) negators, Yager’s negator enjoys a number of optimality properties. It maximizes the Kullback–Leibler divergence from the original distribution, thereby providing the strongest notion of \lq\lq opposition” in an information-theoretic sense. At the same time, it minimizes all Schur-concave functions, including entropy measures, ensuring that the unavoidable increase in uncertainty induced by negation is as controlled as possible. Furthermore, when interpreted as a communication channel, Yager’s negator maximizes mutual information, revealing a fundamental duality: it produces the most contrasting distribution while preserving the largest amount of information about the original one. Finally, we have quantified the increase in uncertainty under repeated negation and characterized the rate of convergence toward the uniform distribution.

Taken together, these results provide a  multifaceted justification for considering Yager’s negator as the canonical choice among a wide class of admissible negation operators.

Several directions for future research naturally emerge from this work. First, it would be of interest to extend the present analysis beyond independent negators to more general, possibly context-dependent transformations, where the negation of each component depends on the full distribution \citep{batyrshin2021negations, Wu_exp, Zhang}. Second, the interaction between negation and alternative divergence measures (e.g., Wasserstein distances or $f$-divergences) deserves further investigation. Third, exploring the role of probabilistic negation in modern machine learning—particularly in settings such as learning from complementary labels, uncertainty quantification, and robust inference—may lead to new algorithmic insights. Finally, applications to multi-agent systems, information fusion, and decision-making under uncertainty represent promising venues where the theoretical properties established here could be exploited in practice.

\backmatter

%\bmhead{Acknowledgements}

\bibliography{sn-bibliography}

@article{batyrshin2021contracting,
  title={Contracting and involutive negations of probability distributions},
  author={Batyrshin, I.},
  journal={Mathematics},
  volume={9},
  number={19},
  pages={2389},
  year={2021},
  publisher={MDPI},
  doi = {10.3390/math9192389}
}

@article{batyrshin2021negations,
  title={Negations of probability distributions: a survey},
  author={Batyrshin, I. and Kubysheva, N. and Bayrasheva, V. and Kosheleva, O. and Kreinovich, V.},
  journal={Computaci{\'o}n y Sistemas},
  volume={25},
  pages={775--781},
  year={2021}
}

@article{generating,
  author  = {Batyrshin, I. and Villa-Vargas, L. and Ramirez-Salinas, M. and Salinas-Rosales, M. and Kubysheva, N.},
  title   = {Generating negations of probability distributions},
  journal = {Soft Computing},
  year    = {2021},
  volume  = {25},
  pages   = {7929--7935},
  doi     = {10.1007/s00500-021-05802-5},
  publisher = {Springer}
}

@book{beckenbach2012inequalities,
  author    = {Beckenbach, E. F. and Bellman, R.},
  title     = {Inequalities},
  year      = {1961},
  publisher = {Springer Science \& Business Media},
  address   = {Berlin},
  isbn      = {978-3-642-64971-4}
}

@book{boyd2004convex,
  title={Convex optimization},
  author={Boyd, S. and Vandenberghe, L.},
  year={2004},
  publisher={Cambridge university press}
}

@article{chaudhary2025extension,
  author  = {Chaudhary, S. and Sahu, P. and Gupta, N.},
  title   = {Extension of {Yager}'s negation of probability distribution based on uncertainty measures},
  journal = {arXiv preprint arXiv:2504.04762},
  year    = {2025},
  eprint  = {2504.04762},
  archivePrefix = {arXiv},
  primaryClass  = {cs.IT}
}

@book{CT,
  author    = {Cover, Thomas M. and Thomas, J. A.},
  title     = {Elements of Information Theory},
  edition   = {2nd},
  year      = {2006},
  publisher = {Wiley-Interscience},
  address   = {Hoboken, NJ, USA},
  isbn      = {978-0-471-24195-9}
}

@article{Dempster1967,
  author  = {Dempster, Arthur P.},
  title   = {Upper and lower probabilities induced by a multivalued mapping},
  journal = {Annals of Mathematical Statistics},
  year    = {1967},
  volume  = {38},
  number  = {2},
  pages   = {325--339},
  doi     = {10.1007/978-3-540-44792-4_3}
}

@article{Deng_extropy2024,
  author  = {Deng, X. and Xue, S. and Jiang, W. and Zhang, X.},
  title   = {Plausibility Extropy: The Complementary Dual of Plausibility Entropy},
  journal = {IEEE Transactions on Systems, Man, and Cybernetics: Systems},
  year    = {2024},
  volume  = {54},
  number  = {11},
  pages   = {6936--6947},
  doi     = {10.1109/TSMC.2024.3444811}
}

@article{ell,
  author  = {Ellerman, D.},
  title   = {The Logic of Partitions: Introduction to the Dual of the Logic of Subsets},
  journal = {Review of Symbolic Logic},
  year    = {2010},
  volume  = {3},
  number  = {2},
  pages   = {287--350},
  doi     = {10.1017/S1755020310000018}
}

@article{Gao+,
  author  = {Gao, X. and Deng, Y.},
  title   = {The generalization negation of probability distribution and its application in target recognition based on sensor fusion},
  journal = {International Journal of Distributed Sensor Networks},
  year    = {2019},
  volume  = {15},
  number  = {5},
  doi     = {10.1177/1550147719849381}
}

@article{ho2010interplay,
  author  = {Ho, S. W. and Verd{\'u}, S.},
  title   = {On the interplay between conditional entropy and error probability},
  journal = {IEEE Transactions on Information Theory},
  year    = {2010},
  volume  = {56},
  number  = {12},
  pages   = {5930--5942},
  doi     = {10.1109/TIT.2010.2080891}
}

@inproceedings{ishida2017learning,
  author    = {Ishida, T. and Niu, G. and Hu, W. and Sugiyama, M.},
  title     = {Learning from complementary labels},
  booktitle = {Advances in Neural Information Processing Systems},
  volume    = {30},
  year      = {2017},
  url       = {https://proceedings.neurips.cc/paper/2017/hash/6364d3f5207911c50494576d3330e42d-Abstract.html}
}

@inproceedings{Kaur_comment,
  author    = {Kaur, M. and Srivastava, A.},
  title     = {Negation of a Probability Distribution: A Short Comment},
  booktitle = {2022 8th International Conference on Signal Processing and Communication (ICSC)},
  year      = {2022},
  pages     = {439--444},
  doi       = {100.1109/ICSC56524.2022.10009460},
  publisher = {IEEE}
}

@article{Kaur_markov,
  author  = {Kaur, M. and Srivastava, A.},
  title   = {A note on negation of a probability distribution},
  journal = {Soft Computing},
  year    = {2023},
  volume  = {27},
  pages   = {667--676},
  doi     = {10.1007/s00500-022-07635-2}
}

@article{kaur,
  author  = {Kaur, M. and Srivastava, A.},
  title   = {Negation of a probability distribution: An information theoretic analysis},
  journal = {Communications in Statistics - Theory and Methods},
  year    = {2023},
  volume  = {53},
  number  = {17},
  pages   = {6252--6265},
  doi     = {10.1080/03610926.2023.2242986}
}

@article{klein2022some,
  author  = {Klein, I.},
  title   = {Some technical remarks on negations of discrete probability distributions and their information loss},
  journal = {Mathematics},
  year    = {2022},
  volume  = {10},
  number  = {20},
  pages   = {3893},
  doi     = {10.3390/math10203893}
}

@book{kreinovich2018beyond,
  editor    = {Kreinovich, V. and Thach, N. N. and Trung, N. D. and Van Thanh, D.},
  title     = {Beyond Traditional Probabilistic Methods in Economics},
  year      = {2019},
  publisher = {Springer},
  address   = {Cham, Switzerland},
  series    = {Studies in Computational Intelligence},
  doi       = {10.1007/978-3-030-04200-4},
  isbn      = {978-3-030-04199-1}
}

@article{LiuDL,
  author  = {Liu, R. and Deng, Y. and Li, Z.},
  title   = {The maximum entropy negation of basic probability assignment},
  journal = {Soft Computing},
  year    = {2023},
  volume  = {27},
  pages   = {7011--7021},
  doi     = {10.1007/s00500-023-08038-7}
}

@article{Luo_matrix2020,
  author  = {Luo, Z. and Deng, Y.},
  title   = {A Matrix Method of Basic Belief Assignment's Negation in {Dempster}--{Shafer} Theory},
  journal = {IEEE Transactions on Fuzzy Systems},
  year    = {2020},
  volume  = {28},
  number  = {9},
  pages   = {2270--2276},
  doi     = {10.1109/TFUZZ.2019.2930027}
}

@book{marshall1979inequalities,
  author    = {Marshall, A. W. and Olkin, I. and Arnold, B. C.},
  title     = {Inequalities: Theory of Majorization and Its Applications},
  year      = {2010},
  edition   = {2nd},
  publisher = {Springer},
  address   = {New York},
  doi       = {10.1007/978-0-387-68276-1},
  isbn      = {978-0-387-68276-1}
}

@book{mceliece2002theory,
  author    = {McEliece, R. J.},
  title     = {The Theory of Information and Coding},
  year      = {2002},
  edition   = {2nd},
  publisher = {Cambridge University Press},
  address   = {Cambridge, UK},
  isbn      = {978-0-521-00032-1}
}

@article{pham_estimating2021,
  author  = {Pham, U. and Batyrshin, I. and Kubysheva, N. and Kosheleva, O.},
  title   = {Estimating a probability distribution corresponding to the negation of a property},
  journal = {Soft Computing},
  year    = {2021},
  volume  = {25},
  number  = {5},
  pages   = {7975--7983},
  doi     = {10.1007/s00500-021-05728-y}
}

@article{sason2016f,
  title={$f$-divergence inequalities},
  author={Sason, I. and Verd{\'u}, S.},
  journal={IEEE Transactions on Information Theory},
  volume={62},
  number={11},
  pages={5973--6006},
  year={2016},
  publisher={IEEE}
}

@book{Shafer1976,
  author    = {Shafer, G.},
  title     = {A Mathematical Theory of Evidence},
  year      = {1976},
  publisher = {Princeton University Press},
  address   = {Princeton, NJ},
  isbn      = {978-0-691-08175-5}
}

@incollection{smets1994,
  author    = {Smets, P.},
  title     = {What is {Dempster}--{Shafer}'s model?},
  booktitle = {Advances in the Dempster-Shafer Theory of Evidence},
  editor    = {Yager, R. R. and Fedrizzi, M. and Kacprzyk, J.},
  year      = {1994},
  pages     = {5--34},
  publisher = {John Wiley \& Sons},
  address   = {New York}
}

@article{smets1994_2,
  author  = {Smets, P. and Kennes, R.},
  title   = {The transferable belief model},
  journal = {Artificial Intelligence},
  year    = {1994},
  volume  = {66},
  pages   = {191--234},
  doi     = {10.1016/0004-3702(94)90026-4}
}

@article{Sun2020,
  author  = {Sun, C. and Li, S. and Deng, Y.},
  title   = {Determining weights in multi-criteria decision making based on negation of probability distribution under uncertain environment},
  journal = {Mathematics},
  year    = {2020},
  volume  = {8},
  number  = {2},
  pages   = {191},
  doi     = {10.3390/math8020191}
}

@inproceedings{szegedy2016rethinking,
  author    = {Szegedy, C. and Vanhoucke, V. and Ioffe, S. and Shlens, J. and Wojna, Z.},
  title     = {Rethinking the Inception Architecture for Computer Vision},
  booktitle = {Proceedings of the IEEE Conference on Computer Vision and Pattern Recognition (CVPR)},
  year      = {2016},
  pages     = {2818--2826},
  publisher = {IEEE}
}

@article{tanwar2023generalization,
  author  = {Tanwar, P. and Srivastava, A.},
  title   = {Generalization of negation of a probability distribution},
  journal = {International Journal of System Assurance Engineering and Management},
  year    = {2023},
  volume  = {14},
  number  = {1},
  pages   = {447--454},
  doi     = {10.1007/s13198-023-01874-8}
}

@article{TANWAR2023113557,
  author  = {Tanwar, P. and Srivastava, A.},
  title   = {Negation and redistribution with a preference - An information theoretic analysis},
  journal = {Chaos, Solitons \& Fractals},
  year    = {2023},
  volume  = {172},
  pages   = {113557},
  doi     = {10.1016/j.chaos.2023.113557}
}

@article{tinaztepe2023application,
  author  = {Tinaztepe, R.},
  title   = {An application of {Ky} {Fan} inequality: on {Kullback}--{Leibler} divergence between a probability distribution and its negation},
  journal = {Journal of Mathematical Inequalities},
  year    = {2023},
  volume  = {17},
  number  = {4},
  pages   = {1639--1646},
  doi     = {10.7153/jmi-2023-17-107}
}

@article{Wu_exp,
  author  = {Wu, Q. and Deng, Y. and Xiong, N.},
  title   = {Exponential negation of a probability distribution},
  journal = {Soft Computing},
  year    = {2022},
  volume  = {26},
  pages   = {2147--2156},
  doi     = {10.1007/s00500-021-06658-5}
}

@article{Xu+,
  author  = {Xu, S. and Hou, Y. and Deng, X. and Chen, P. and Zhou, S.},
  title   = {Logarithmic negation of basic probability assignment and its application in target recognition},
  journal = {Information},
  year    = {2022},
  volume  = {13},
  number  = {8},
  pages   = {387},
  doi     = {10.3390/info13080387}
}

@article{Yager2015,
  author  = {Yager, R. R.},
  title   = {On the Maximum Entropy Negation of a Probability Distribution},
  journal = {IEEE Transactions on Fuzzy Systems},
  year    = {2015},
  volume  = {23},
  number  = {5},
  pages   = {1899--1907},
  doi     = {10.1109/TFUZZ.2014.2374211}
}

@article{YinDD,
  author  = {Yin, L. and Deng, X. and Deng, Y.},
  title   = {The negation of a basic probability assignment},
  journal = {IEEE Transactions on Fuzzy Systems},
  year    = {2018},
  volume  = {27},
  number  = {1},
  pages   = {135--143},
  doi     = {10.1109/TFUZZ.2018.2871756}
}

@article{Zadeh1965,
  author  = {Zadeh, L. A.},
  title   = {Fuzzy sets},
  journal = {Information and Control},
  year    = {1965},
  volume  = {8},
  number  = {3},
  pages   = {338--353},
  doi     = {10.1016/S0019-9958(65)90241-X}
}

@article{Zhang,
  author  = {Zhang, J. and Liu, R. and Zhang, J. and Kang, B.},
  title   = {Extension of {Yager}'s negation of a probability distribution based on {Tsallis} entropy},
  journal = {International Journal of Intelligent Systems},
  year    = {2020},
  volume  = {35},
  number  = {1},
  pages   = {72--84},
  doi     = {10.1002/int.22198}
}

\end{document}